\documentclass[11pt]{article}

\usepackage[T1]{fontenc}
\usepackage{lmodern,microtype}
\usepackage[margin=0.9in]{geometry}
\usepackage{amsmath,amssymb,amsthm,booktabs,array}
\usepackage[ruled,vlined]{algorithm2e}
\usepackage{placeins}
\usepackage{graphicx}
\usepackage[backend=biber,style=numeric,sorting=none,giveninits=true,maxbibnames=99,doi=true,url=true,eprint=true]{biblatex}
\usepackage[colorlinks=true,allcolors=blue]{hyperref}
\hypersetup{
  pdftitle={GRACE: A General, Exact Changepoint Calculus},
  pdfauthor={Michael A. Grantham},
  pdfkeywords={changepoint type, model selection, functional pruning, dynamic programming, piecewise polynomial, seasonality}
}

\SetAlFnt{\small}
\newcommand{\paperfigure}[2]{%
  \IfFileExists{figures/#1.pdf}{%
    \includegraphics[width=\linewidth,height=0.72\textheight,keepaspectratio]{figures/#1.pdf}%
  }{%
    \IfFileExists{figures/#1.png}{%
      \includegraphics[width=\linewidth,height=0.72\textheight,keepaspectratio]{figures/#1.png}%
    }{%
      \PackageError{GRACE}{Missing figure figures/#1.pdf or figures/#1.png}{Upload the required figure before compiling the submission.}%
    }%
  }%
}

\newtheorem{theorem}{Theorem}
\newtheorem{proposition}[theorem]{Proposition}

\title{GRACE: A General, Exact Changepoint Calculus}
\author{Michael A. Grantham}
\date{September 16, 2026}

\begin{document}
\maketitle

\begin{abstract}
Most changepoint procedures require the analyst to fix both the form fitted within each segment and the relationship between adjacent segments before the changepoint locations are estimated. We consider an $\ell_0$-penalized problem in which the number and locations of changepoints, the form fitted between successive boundaries, the type of transition at each boundary, and all continuous parameters are selected jointly. The central observation is that a permitted transition depends on the preceding fit only through the quantities it inherits. We introduce GRACE, the General Regime-Aware Changepoint Estimator, which retains the accumulated cost conditional on precisely those quantities and constructs a joint conditional envelope whenever several quantities must be inherited together. Ordinary optimal partitioning, FPOP, and CPOP are recovered as special cases by restricting the segment forms, boundary relationships, and quantities retained across each boundary. GRACE therefore provides a unified exact conditional-cost calculus for a broad class of changepoint problems in which transitions may preserve, release, or fix any finite collection of boundary parameters.

CPOP provides the motivating example. Its level-conditioned cost makes exact continuous piecewise-linear fitting possible and can also accommodate constant regimes, trend termination and resumption, and complete resets. It cannot represent a discontinuous level shift that preserves the incoming slope, because regions of optimality over endpoint level need not correspond to regions of optimality over slope. GRACE resolves this failure by profiling every generated linear candidate at fixed endpoint level and at fixed slope before either envelope is pruned. The resulting pair of envelopes supports continuous slope changes, level shifts preserving slope, constant regimes, trend termination and resumption, and complete resets within one exact optimization. Under least squares, finite-dimensional linear-basis models with affine boundary relationships produce quadratic candidate costs. We establish the exactness of the general recursion and develop functional, penalty-based, and objective-bound pruning rules. Polynomial and seasonal specializations illustrate the construction, and the recursion admits substantial parallel computation.
\end{abstract}

\noindent\textbf{Keywords:} changepoint type; model selection; functional pruning; dynamic programming; piecewise polynomial; seasonality.

\section{Introduction}

Many time series are well described locally by simple curves, although the relationship between successive curves may change from one boundary to the next. A signal may remain continuous while its slope changes, jump to a new level while continuing along the same trend, become exactly constant, resume growth after a plateau, or begin to exhibit curvature. Ordinarily, the analyst chooses among these possibilities before fitting the model and the resulting changepoint algorithm then searches for locations under one prescribed definition of what may change.

GRACE provides a unified exact method for this broader class of problems. The analyst declares the permitted segment forms and boundary relationships, after which one conditional-cost recursion jointly selects the changepoint locations, the type of transition at each boundary, the form fitted between boundaries, and all continuous parameters. The transition dictionary determines which scalar or joint conditional states the recursion must retain. For finite dictionaries of linear-basis least-squares models subject to affine equality constraints, extension and profiling preserve quadratic candidate costs (Proposition~\ref{prop:quadratic-closure}), and the resulting recursion attains the global minimum under the stated sufficiency conditions (Theorem~\ref{thm:exact}). Thus additional segment forms and boundary relationships enter the same exact construction through their declared losses and inherited quantities. Ordinary optimal partitioning is recovered when no continuous quantity is retained. FPOP is recovered for piecewise-constant segments by exposing the current segment level while allowing each changepoint to reset it \cite{fpop}; CPOP is recovered when every segment is linear and every boundary preserves endpoint level \cite{cpop}. Richer procedures arise by exposing whichever scalar or joint vector a permitted transition preserves.

Graph-constrained methods such as gfpop encode permitted movements among discrete regimes while retaining a scalar continuous parameter \cite{gfpop}. Geometric functional pruning and DUST treat multivariate segment parameters when successive segments are otherwise independent \cite{geom,dust}. GRACE permits different transitions to inherit different projections of the preceding fit and retains those quantities jointly whenever a later segment depends on them together. This transition-dependent inheritance is the central object of the calculus and allows the same exact recursion to encompass a much larger collection of changepoint problems.

Our objective is an $\ell_0$-penalized model-selection problem in which the number, locations, and types of changepoints are all unknown. Let
\[
0=\tau_0<\tau_1<\cdots<\tau_m<\tau_{m+1}=n
\]
denote the segment boundaries. Let $a_j$ identify the form fitted on segment $j$, let $\theta_j$ contain its continuous parameters, and let $d_j$ identify the type of transition at $\tau_j$. We call
\[
\mathcal S=(m,\tau,a,d)
\]
a \emph{changepoint story}: it specifies the number and locations of the changepoints, the form fitted on each segment, and the relationship imposed at every boundary. The set $\Theta(\mathcal S)$ contains the continuous parameter values satisfying those relationships.

Writing $C_{\tau_{j-1},\tau_j}^{a_j}(\theta_j)$ for the loss on segment $j$, the desired minimization is
\begin{equation}
(\widehat{\mathcal S},\widehat{\boldsymbol{\theta}})
\in
\arg\min_{\substack{\mathcal S\\ \boldsymbol{\theta}\in\Theta(\mathcal S)}}
\left\{
\sum_{j=1}^{m+1} C_{\tau_{j-1},\tau_j}^{a_j}(\theta_j)
+
\beta_{\mathrm{initial},a_1}
+
\sum_{j=1}^{m}\beta_{d_j}
\right\}.
\label{eq:typed-objective}
\end{equation}
Each selected boundary contributes a discrete penalty, whose size may depend on the transition type and on the number of parameters introduced there. This is an $\ell_0$ regularization problem.

CPOP solves a particularly important instance of almost the same optimization. Every segment is linear, every changepoint is a continuous slope change, and adjacent segments are required to share their fitted boundary value. If $\phi_j$ is the fitted value at $\tau_j$ and $C_{r,t}(\psi,\phi)$ is the residual sum of squares for the line joining values $\psi$ and $\phi$ over $(r,t]$, then CPOP minimizes
\begin{equation}
\min_{m,\tau,\phi}
\left\{
\sum_{j=1}^{m+1}
C_{\tau_{j-1},\tau_j}(\phi_{j-1},\phi_j)
+
m\beta_{\mathrm{CPOP}}
\right\}.
\label{eq:intro-cpop-objective}
\end{equation}
Apart from the common initial-model term omitted from \eqref{eq:intro-cpop-objective}, this has the same form as \eqref{eq:typed-objective}: a sum of segment losses and discrete changepoint penalties over a predefined class of segment means and boundary transitions. Its defining restriction is that the same value $\phi_j$ serves as the end of segment $j$ and the beginning of segment $j+1$, thereby imposing continuity at every selected boundary.

That continuity constraint creates the need for joint estimation because the observations on both sides of a changepoint must contribute to its fitted value. CPOP accomplishes this by retaining the minimum accumulated cost as a function of the fitted value at the current endpoint. When another segment is appended, its starting value is identified with that endpoint value, the earlier and outgoing costs are added, and the shared value is minimized only after both sides have contributed. Under least squares, each candidate cost is quadratic in the retained value, and the minimum over candidates forms an envelope that can be pruned without sacrificing the global optimum \cite{cpop}.

Continuity is nevertheless a modeling assumption, and it can give a poor account of signals whose regimes are related in other ways. In Figure~\ref{fig:flat-reset}, a flat segment precedes a reset. A more faithful model would permit an exactly flat segment, a genuine discontinuous reset, and a later resumption of trend, while allowing these alternatives to compete with ordinary continuous slope changes in a single minimization procedure. CPOP cannot select that account because every fitted segment must join continuously to the next; it instead represents the reset at observation 72 by inserting a steep descending segment between two nearby changepoints.

\begin{figure}[!htbp]
\centering
\paperfigure{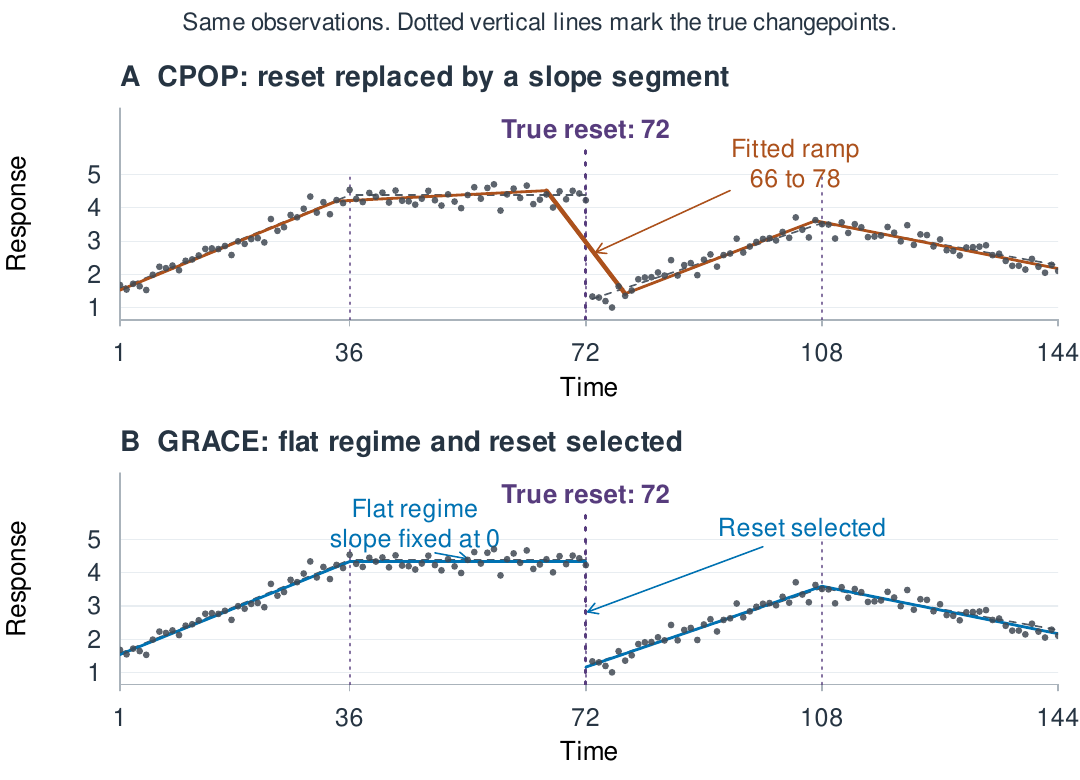}{2.5in}
\caption{A reset forced into a continuous model. Both rows use identical observations. Dotted vertical lines mark the generating changepoints at 36, 72, and 108. CPOP represents the reset at 72 by a steep segment between slope changes at 66 and 78. GRACE selects the reset itself and an explicitly constant preceding regime. Dots are observations, dashed curves the generating signal, and solid curves the fitted signal. Both searches use minimum segment length 12 and penalty $2\beta$ for a continuous slope change. GRACE additionally permits trend termination, trend resumption, and resets to linear or constant regimes, with penalties determined by \eqref{eq:type-penalty}. Here $\beta=0.18^2\log144$. The residual sums of squares are 14.557 for CPOP and 4.853 for GRACE.}
\label{fig:flat-reset}
\end{figure}
\FloatBarrier

We first give an exact algorithm for estimating this richer linear model under the $\ell_0$ objective in \eqref{eq:linear-objective}. We then generalize the same construction to a broad class of models, including polynomial regimes with persistent seasonality.

The failure in Figure~\ref{fig:flat-reset} arises from the permitted model, rather than from the conditional-cost machinery. CPOP already retains the conditional information needed for several alternatives to an ordinary continuous slope change. GRACE exploits that fact and introduces additional conditional information only when a permitted transition preserves a quantity that CPOP has already minimized away.

\section{The CPOP engine and its level-conditioned extensions}
\label{sec:cpop-engine}

\subsection{A shared boundary value}

Suppose that observations $y_1,\ldots,y_n$ occur at locations $1,\ldots,n$. Let
\[
0=\tau_0<\tau_1<\cdots<\tau_m<\tau_{m+1}=n
\]
denote the segment boundaries, and let $\phi_j$ be the fitted value at boundary $\tau_j$. Between two successive boundaries $r$ and $t$, a continuous linear segment starts at value $\psi$ and ends at value $\phi$. Its residual sum of squares is
\begin{equation}
C_{r,t}(\psi,\phi)
=
\sum_{i=r+1}^{t}
\left\{
y_i-\psi-\frac{i-r}{t-r}(\phi-\psi)
\right\}^{2}.
\label{eq:segment-cost}
\end{equation}

CPOP minimizes
\begin{equation}
\min_{m,\tau,\phi}
\left\{
\sum_{j=1}^{m+1}
C_{\tau_{j-1},\tau_j}(\phi_{j-1},\phi_j)
+
m\beta_{\mathrm{CPOP}}
\right\},
\label{eq:cpop-objective}
\end{equation}
subject to the chosen minimum segment length and any other admissibility conditions \cite{cpop}. The shared value $\phi_j$ appears in the segment cost on each side of the $j$th changepoint. Estimating the two segments independently would generally produce different values at the boundary and would no longer solve the continuous model.

\subsection{The conditional-cost recursion}

Let $F_t^\phi(\phi)$ be the minimum cost through observation $t$ when the fitted value at $t$ is held at $\phi$ and the superscript $\phi$ indicates that the envelope is partitioned over \emph{level}. Earlier changepoints and fitted values are optimized, but the final value remains an argument of the cost function.

If the last changepoint before $t$ occurs at $r$, write $\psi$ for the fitted value shared by the earlier fit and the final segment. Conditional on $\psi$, the best earlier fit costs $F_r^\phi(\psi)$. Appending the final segment and then minimizing over the shared value gives
\begin{equation}
F_t^\phi(\phi)
=
\min_{0\leq r<t}
\min_{\psi}
\left\{
F_r^\phi(\psi)
+
C_{r,t}(\psi,\phi)
+
\beta_{\mathrm{CPOP}}
\right\}.
\label{eq:cpop-recursion}
\end{equation}
Setting $F_0^\phi(\psi)=-\beta_{\mathrm{CPOP}}$ makes the $r=0$ alternative a one-segment fit with no changepoint penalty.

The earlier cost $F_r^\phi(\psi)$ and the new segment cost $C_{r,t}(\psi,\phi)$ are added before $\psi$ is minimized, allowing observations on both sides of the boundary to determine their common fitted value. The crucial algebraic fact is that this operation preserves the form of the candidate costs. Under least squares, each candidate represented in $F_r^\phi(\psi)$ is quadratic in $\psi$, and after the segment cost is added, the resulting expression is jointly quadratic in $(\psi,\phi)$. Profiling over $\psi$ is therefore available in closed form and leaves another quadratic, now in $\phi$. This closure is the small miracle behind CPOP: after a segment is appended and its starting value is jointly estimated, the accumulated cost retains exactly the functional form needed to repeat the recursion at the next boundary.

Every possible changepoint history reaching $t$ consequently contributes a quadratic function of $\phi$, and $F_t^\phi(\phi)$ is their pointwise minimum. A candidate cannot be discarded merely because its own minimum is comparatively large, since it may still be the cheapest candidate for some range of endpoint values and may therefore lead to the eventual optimum. Functional pruning removes it only when it fails to attain the lower envelope for every $\phi$.

\subsection{What can be done with the level envelope alone}

The quadratic closure underlying CPOP survives several useful extensions of the permitted model. Trend termination, trend resumption, and complete resets each add a least-squares segment cost and profile over any parameters that are not carried forward to the new endpoint. Because these operations preserve quadratic form, every resulting candidate is available in closed form as a quadratic in the new endpoint value $\phi$. The endpoint level therefore remains the only continuous information passed to the next boundary, and all of these candidates can be incorporated directly into CPOP's level-conditioned envelope machinery. We write $F_{t,\mathrm{lin}}^\phi$ and $F_{t,\mathrm{const}}^\phi$ for the level-conditioned envelopes restricted to fits whose current segment is respectively linear or constant.

Suppose first that an incoming linear regime is followed by a constant segment. Preserving the boundary value while fixing the outgoing slope at zero gives
\[
C_{r,t}(\phi,\phi)=\sum_{i=r+1}^{t}(y_i-\phi)^2.
\]
Each incoming linear candidate is already quadratic in its endpoint value $\phi$, and the constant-segment cost is also quadratic in $\phi$. Their sum is therefore another quadratic in the same endpoint value. At the envelope level, the update is
\[
F_{r,\mathrm{lin}}^\phi(\phi)+C_{r,t}(\phi,\phi).
\]
This transition, which we call \emph{trend termination}, jointly estimates the final value of the incoming line and the level of the outgoing constant regime. The incoming slope remains freely estimated from the preceding observations; only the outgoing slope is fixed at zero.

Conversely, suppose that a constant regime is followed by an unrestricted line that inherits its level. If the new line begins at $\psi$ and ends at $\phi$, the update is
\[
\min_{\psi}\left\{F_{r,\mathrm{const}}^\phi(\psi)+C_{r,t}(\psi,\phi)\right\}.
\]
For each candidate in the incoming constant envelope, the expression inside braces is jointly quadratic in $(\psi,\phi)$. Profiling over the shared value $\psi$ is therefore available in closed form and leaves a quadratic in $\phi$, exactly as in the original CPOP recursion. We call this transition \emph{trend resumption}. Its outgoing slope is estimated jointly with the inherited boundary value and is permitted, but not required, to be nonzero.

A complete reset requires no continuous information from the incoming fit. Define the unconditional minimum through $r$
\begin{equation}
F_r^\varnothing=\min_{\phi}F_r^\phi(\phi),
\label{eq:unconditional-linear}
\end{equation}
where the minimization also ranges over compatible constant and linear fits through $r$ when both are permitted. For a reset to an unrelated line, the candidate passed to the level envelope is
\[
F_r^\varnothing+\min_{\psi}C_{r,t}(\psi,\phi).
\]
Because the segment cost is jointly quadratic in $(\psi,\phi)$, profiling over its freely estimated starting value $\psi$ again yields a closed-form quadratic in $\phi$. A reset to an unrelated constant produces
\[
F_r^\varnothing+C_{r,t}(\phi,\phi),
\]
which is already quadratic in $\phi$.

Thus trend termination, trend resumption, and complete resets all generate candidate costs of the same form required by CPOP: closed-form quadratics conditional only on the current endpoint value $\phi$. Transition penalties add constants and therefore do not disturb this closure. These candidates can consequently be folded into the level-conditioned envelope and propagated by the existing recursion without introducing another continuous state. The transitions need not receive the same penalty because they introduce different numbers of continuous parameters: trend termination fixes the outgoing slope, trend resumption estimates a new slope, and a complete reset estimates a new level and possibly a new slope. The precise penalties are defined alongside the transition dictionary in the next section.

\subsection{The transition the level envelope cannot supply}

Now consider a discontinuous level shift that preserves the rate of change. The fitted values on the two sides of the boundary are unrelated, but the incoming and outgoing slopes must agree. Observations on both sides of the jump must therefore contribute to the estimation of one common slope.

For any generated unrestricted linear candidate on $(r,t]$, the slope of its final segment can be recovered directly from its starting and ending values $\psi$ and $\phi$. Each generated candidate therefore contains the information needed to condition on slope. CPOP, however, organizes and prunes these candidates according to their endpoint level: it retains those candidates that attain the lower envelope for some value of $\phi$. The resulting regions of optimality are regions over level, and they need not correspond to regions of optimality over slope. A candidate that never attains the level envelope may nevertheless be the cheapest candidate at a particular slope. If it is discarded on the basis of level dominance, no transformation of the surviving envelope can recover that slope-conditioned optimum.

A simple algebraic example illustrates the information lost by pruning over level alone. Let $\phi$ denote endpoint level and $\dot\phi$ slope, and consider two quadratic candidate costs with the same permitted continuations:
\[
Q_1(\phi,\dot\phi)=\phi^2+\dot\phi^2,
\qquad
Q_2(\phi,\dot\phi)=\phi^2+(\dot\phi-2)^2+1.
\]
Their level-conditioned profiles are
\[
\min_{\dot\phi}Q_1(\phi,\dot\phi)=\phi^2,
\qquad
\min_{\dot\phi}Q_2(\phi,\dot\phi)=\phi^2+1.
\]
The second candidate therefore loses at every endpoint level. Their slope-conditioned profiles, however, are
\[
\min_{\phi}Q_1(\phi,\dot\phi)=\dot\phi^2,
\qquad
\min_{\phi}Q_2(\phi,\dot\phi)=(\dot\phi-2)^2+1.
\]
At slope $2$, the second candidate costs $1$ while the first costs $4$. Adding the same future cost $(\dot\phi-2)^2$ gives minimized totals of $2$ and $1$, respectively. Thus a candidate discarded everywhere by the level envelope can supply the better completed fit when the continuation depends on slope.

The remedy is to construct both envelopes before pruning either one. For a segment $(r,t]$, endpoint level and slope are related by
\begin{equation}
\dot\phi=\frac{\phi-\psi}{t-r},
\qquad
\psi=\phi-(t-r)\dot\phi.
\label{eq:slope-transform}
\end{equation}
GRACE uses this relationship to profile every generated candidate in two ways: once at fixed endpoint level and once at fixed slope. Both profiles are available in closed form as quadratics in the retained quantity. They are then placed into separate level- and slope-conditioned envelopes, each of which is pruned according to its own regions of optimality. A candidate eliminated from one envelope consequently remains available if it is required by the other.

\section{Exact optimization over linear changepoint types}
\label{sec:linear-grace}

\subsection{Two conditional views of each generated candidate}

Consider an unrestricted line on $(r,t]$ appended to the familiar level-conditioned cost through $r$. Apart from its additive transition penalty, the accumulated cost is formed from
\[
F_r^\phi(\psi)+C_{r,t}(\psi,\phi).
\]
The following operations are applied separately to each retained quadratic represented in $F_r^\phi$ before the newly generated candidates are compared at $t$.

Holding the new endpoint value fixed and profiling over the shared value $\psi$ gives
\begin{equation}
\min_\psi\left\{F_r^\phi(\psi)+C_{r,t}(\psi,\phi)\right\}.
\label{eq:level-profile}
\end{equation}
This is the ordinary CPOP update. It produces the candidate quadratics whose pointwise minimum forms the new level-conditioned envelope $F_t^\phi(\phi)$.

The same generated candidates can instead be viewed conditional on the slope of their final segment. Substituting $\psi=\phi-(t-r)\dot\phi$ and profiling over the remaining endpoint value gives
\begin{equation}
\min_\phi\left\{F_r^\phi\bigl(\phi-(t-r)\dot\phi\bigr)+C_{r,t}\bigl(\phi-(t-r)\dot\phi,\phi\bigr)\right\}.
\label{eq:slope-profile}
\end{equation}
Under least squares, the accumulated cost before either profiling operation is jointly quadratic in $(\psi,\phi)$. The level profile is therefore available in closed form as a quadratic in $\phi$, while the slope profile is available in closed form as a quadratic in $\dot\phi$. The pointwise minimum of the slope-conditioned quadratics forms the second envelope $F_t^{\dot\phi}(\dot\phi)$. Appendix~\ref{app:transformation} gives the explicit coefficient transformation.

In the first-order model, $F_t^{\dot\phi}$ ranges over candidates whose current regime is an unrestricted line, because only such a regime supplies an incoming slope for a later slope-preserving transition.

Other transition types change the accumulated cost through $r$ that is combined with $C_{r,t}$, but the two profiling operations remain the same. Every generated unrestricted linear candidate is placed into both conditional collections before either is pruned. The candidates attaining the two envelopes are then retained independently: a candidate may contribute to the slope envelope even if it contributes nowhere to the level envelope, or conversely.

The level-conditioned envelope therefore remains sufficient for continuous slope changes, trend termination, and trend resumption, while its unconditional minimum supplies complete resets. The slope-conditioned envelope adds the one missing possibility: a discontinuous level shift that preserves the incoming slope. Together, these operations generate all the boundary relationships used in the first-order model, which we now collect according to what each transition preserves, estimates, or fixes. 

\subsection{The transition dictionary and objective}

Table~\ref{tab:linear-dictionary} collects the permitted relationships between successive constant and linear regimes. We call this collection the \emph{changepoint dictionary}. Each entry specifies which quantities are inherited from the preceding regime, which are estimated anew, and which are fixed in the outgoing regime. Choosing an entry at each changepoint, together with the changepoint locations and segment forms, determines the changepoint story introduced in Section~1.

\begin{table}[ht]
\centering
\small
\begin{tabular}{p{0.29\linewidth}p{0.20\linewidth}p{0.23\linewidth}p{0.18\linewidth}}
\toprule
Transition & Preserved & Newly estimated & \shortstack{Fixed outgoing\\quantity}\\
\midrule
Continuous slope change & Level & Slope & None\\
Level shift preserving slope & Slope & Level & None\\
Complete linear reset & None & Level and slope & None\\
Trend termination & Level & None & Slope $=0$\\
Trend resumption & Level & Slope & None\\
Reset to a constant & None & Level & Slope $=0$\\
\bottomrule
\end{tabular}
\caption{The first-order transition dictionary for constant and linear regimes.}
\label{tab:linear-dictionary}
\end{table}

Give segment $j$ its own starting and ending values $\psi_j$ and $\phi_j$, let $a_j\in\{\mathrm{const},\mathrm{lin}\}$ record its form, and let $d_j$ identify the transition at $\tau_j$. GRACE minimizes
\begin{equation}
\min_{\substack{m,\tau,a,d,\psi,\phi\\\mathrm{admissible}}}
\left\{
\sum_{j=1}^{m+1}
C_{\tau_{j-1},\tau_j}(\psi_j,\phi_j)
+
\beta_{\mathrm{initial},a_1}
+
\sum_{j=1}^{m}\beta_{d_j}
\right\},
\label{eq:linear-objective}
\end{equation}
subject to the boundary relationships selected by the story. A continuous transition equates $\phi_j$ with $\psi_{j+1}$. A level shift preserving slope equates the slopes of the neighboring segments but imposes no equality between their boundary levels. A reset imposes no relationship between the incoming and outgoing continuous parameters. A constant segment imposes $\psi_j=\phi_j$.

Following CPOP \cite{cpop}, we use the following Schwarz-motivated criterion \cite{schwarz}. Let
\[
\beta=\sigma^2\log n.
\]
If transition $d$ introduces $p_d$ continuous parameters, its penalty is
\begin{equation}
\beta_d=(1+p_d)\beta,
\label{eq:type-penalty}
\end{equation}
where the first unit accounts for the unknown changepoint location. A parameter preserved from the preceding regime was counted when first introduced and incurs no additional penalty, while fixing an outgoing quantity at a specified value introduces no parameter.

Under this criterion, a continuous slope change, trend resumption, or level shift preserving slope receives penalty $2\beta$; a complete reset to a line receives $3\beta$; trend termination receives $\beta$; and a reset to a constant receives $2\beta$. An initial line receives penalty $2\beta$, while an initial constant receives $\beta$, since no changepoint location is introduced. Appendix~\ref{app:penalty-convention} derives these penalties and explains the treatment of the unknown changepoint locations.

\subsection{The first-order recursion}

For a proposed outgoing line on $(r,t]$, write
\[
\psi=\phi-(t-r)\dot\phi.
\]
The incoming cost depends on the transition type.

A continuous slope change preserves the boundary level and estimates a new slope. Trend resumption does the same when the preceding regime is constant. A level shift preserving slope inherits slope instead of level. A complete reset inherits neither. Combining these alternatives gives
\begin{equation}
C_{r,t}(\psi,\phi)
+
\min\left\{
\begin{aligned}
&F_{r,\mathrm{lin}}^\phi(\psi)+2\beta,
&&\text{continuous slope change},\\
&F_{r,\mathrm{const}}^\phi(\psi)+2\beta,
&&\text{trend resumption},\\
&F_r^{\dot\phi}(\dot\phi)+2\beta,
&&\text{level shift preserving slope},\\
&F_r^\varnothing+3\beta,
&&\text{complete reset to a line}.
\end{aligned}
\right.
\label{eq:linear-update}
\end{equation}
Each term includes only earlier fits for which the named transition is admissible.

For an outgoing constant segment, the corresponding update is
\begin{equation}
C_{r,t}(\phi,\phi)
+
\min\left\{
\begin{aligned}
&F_{r,\mathrm{lin}}^\phi(\phi)+\beta,
&&\text{trend termination},\\
&F_r^\varnothing+2\beta,
&&\text{reset to a constant}.
\end{aligned}
\right.
\label{eq:constant-update}
\end{equation}
A redundant boundary between two unchanged constant regimes need not be included.

Each branch in \eqref{eq:linear-update} generates an accumulated quadratic in the outgoing line parameters. Every such quadratic is then profiled once at fixed endpoint level and once at fixed slope, using \eqref{eq:level-profile} and \eqref{eq:slope-profile}. The profiles are placed in their respective candidate collections before either collection is pruned. Constant candidates contribute to the level envelope and have slope fixed at zero.

Initialization considers every allowed one-segment model. An initial line contributes
\[
C_{0,t}\bigl(\phi-t\dot\phi,\phi\bigr)+2\beta,
\]
and an initial constant contributes
\[
C_{0,t}(\phi,\phi)+\beta.
\]
The terminal optimum is obtained from the unconditional minimum over all admissible candidates ending at $n$. Given here is the GRACE recursion applied to constant and linear regimes.

\begin{algorithm}[H]
\caption{Exact GRACE recursion for constant and linear regimes}
\label{alg:linear-grace}
\KwIn{Observations; admissible segment lengths; permitted transition types; penalties.}
Create separate candidate collections for constant and linear regimes\;
\For{$t=1,\ldots,n$}{
  Add every admissible one-segment constant or linear model ending at $t$\;
  \For{each admissible previous boundary $r<t$}{
    Generate continuous slope-change candidates from $F_{r,\mathrm{lin}}^\phi$\;
    Generate trend-resumption candidates from $F_{r,\mathrm{const}}^\phi$\;
    Generate slope-preserving level-shift candidates from $F_r^{\dot\phi}$\;
    Generate complete linear resets from $F_r^\varnothing$\;
    Generate trend-termination candidates from $F_{r,\mathrm{lin}}^\phi$\;
    Generate constant resets from $F_r^\varnothing$\;
  }
  \For{each generated linear candidate}{
    Profile it at fixed endpoint level and add the result to the level collection\;
    Profile it at fixed slope and add the result to the slope collection\;
    Store the corresponding conditional minimizers\;
  }
  Add generated constant candidates to their level-conditioned collection\;
  Construct the unconditional minima needed by resets and terminal evaluation\;
  Prune the completed envelopes as described in Section~\ref{sec:linear-pruning}\;
}
Minimize over all admissible candidates ending at $n$ and reconstruct the selected story from the stored minimizing choices\;
\KwOut{The optimal objective, changepoint locations and types, and jointly fitted parameters.}
\end{algorithm}

The recursion differs from CPOP in two respects. First, it generates quadratics corresponding to the different permitted changepoint types rather than only to continuous slope changes. Second, every generated linear candidate is profiled both by endpoint level and by slope, and the resulting envelopes are pruned separately. A candidate may therefore survive in one envelope even if it contributes nowhere to the other.

\begin{proposition}[Exactness of the first-order recursion]
The first-order recursion attains the global minimum of \eqref{eq:linear-objective}.
\end{proposition}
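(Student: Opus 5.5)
I would prove exactness by induction on $t$, using an invariant that pins down the meaning of each envelope. For each $t$, define the true conditional value functions over admissible stories on $(0,t]$ together with their parameters satisfying the story's boundary relationships:
\begin{itemize}
\item $V_{t,\mathrm{lin}}^\phi(\phi)$ is the minimum of the penalized partial objective (segment losses through $t$, plus $\beta_{\mathrm{initial},a_1}$ and the transition penalties) over stories whose last segment is linear with fitted endpoint $\phi$;
\item $V_{t,\mathrm{const}}^\phi(\phi)$ is the analogous minimum over stories whose last segment is constant;
\item $V_t^{\dot\phi}(\dot\phi)$ is the minimum over stories whose last segment is linear with slope $\dot\phi$;
\item $V_t^\varnothing$ is the unconditional minimum.
\end{itemize}
The claim to establish is that, before and after pruning,
\[
F_{t,\mathrm{lin}}^\phi=V_{t,\mathrm{lin}}^\phi,\qquad
F_{t,\mathrm{const}}^\phi=V_{t,\mathrm{const}}^\phi,\qquad
F_t^{\dot\phi}=V_t^{\dot\phi},\qquad
F_t^\varnothing=V_t^\varnothing,
\]
pointwise for every value of the argument. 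The proposition follows at $t=n$, since the terminal step returns $V_n^\varnothing$, which is exactly the minimum of \eqref{eq:linear-objective}. The stored minimizers then reconstruct a story attaining it.

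The core is a dynamic-programming decomposition of $V_t$. I would split any admissible story ending at $t$ according to its last boundary $r$, the form $a$ of its final segment, and the transition type $d$ at $r$ (or note that it has a single segment). The objective is additive: the cost of the prefix story on $(0,r]$ plus $C_{r,t}$ plus $\beta_d$. The constraints couple the prefix and the final segment only through the quantity that $d$ preserves:
\begin{itemize}
\item the endpoint level $\psi$ for a continuous slope change, trend resumption, or trend termination;
\item the slope $\dot\phi$ for a level shift preserving slope;
\item nothing for a reset.
\end{itemize}
For fixed $(r,a,d)$ and a fixed value of the inherited quantity, minimizing over the prefix therefore gives exactly the corresponding $V_r$ term. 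The prefix form restriction (linear or constant) matches the subscripts in \eqref{eq:linear-update} and \eqref{eq:constant-update}. Exchanging the order of the minima, which is legitimate because everything is a minimum over a product of sets, shows that $V_t$ equals the minimum over branches of the corresponding profiles of $V_r+C_{r,t}+\beta_d$. For a linear final segment parametrized by $(\psi,\phi)$, or equivalently $(\phi,\dot\phi)$ via \eqref{eq:slope-transform}, the two final profilings are \eqref{eq:level-profile} and \eqref{eq:slope-profile}. Together with the initialization terms, this is precisely the recursion. By the inductive hypothesis, replacing $V_r$ by $F_r$ yields $F_t=V_t$ before pruning.

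The remaining issue is the one the paper emphasizes, and I expect it to be the main obstacle: showing that pruning does not break the invariant. This needs two facts.

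First, each envelope is a pointwise minimum of candidate functions. Removing a candidate that attains the minimum nowhere, with respect to that envelope's own argument, leaves the function unchanged. The subtlety is that a candidate used later in the slope envelope may have been discarded from the level envelope. This is harmless because the two profiles of every generated linear candidate are formed from the unpruned joint quadratic and entered into both collections before either is pruned. Each envelope is therefore exactly correct as a function, and the recursion at later times consumes only these envelope functions, never individual candidates. I would also have to check that the $\min$ over candidates commutes with adding $C_{r,t}$ and profiling, which is immediate since both are monotone operations applied candidatewise.

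Second, the paper's penalty-based and objective-bound pruning rules of Section~\ref{sec:linear-pruning} discard candidates that cannot lie on an optimal story at all, rather than candidates that are merely dominated. For these I would argue that the discarded candidates cannot change $V_n^\varnothing$ or any envelope value along an optimal path, so that an optimal story survives even if intermediate envelopes are altered off the optimal path.

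A final technical point is that the profiles must be attained. The quadratics could be degenerate, for example when a segment is too short to determine both parameters. I would invoke the minimum segment length, or Proposition~\ref{prop:quadratic-closure}, to guarantee that each profiling step is a strictly convex (or at least bounded-below) quadratic minimization, so that every minimum above is finite and attained.
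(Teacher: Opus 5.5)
Your proposal is correct and follows essentially the same route as the paper's proof. Both decompose by the last boundary $r$, the final segment form, and the transition $d$, identify the inner minimum with the transition-specific conditional cost at $r$ (level, slope, or unconditional), and close by induction from the one-segment terms; your explicit value functions $V$ just make the paper's induction hypothesis precise.

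Your pruning discussion goes beyond what the paper proves here, since the paper treats $k\beta$ and objective-bound pruning separately in Propositions~\ref{prop:kbeta} and~\ref{prop:bound}. Accordingly, claim the pointwise invariant only for the unpruned recursion and functional pruning. Also, attainment of the profiles comes from the least-squares costs being bounded below, not from Proposition~\ref{prop:quadratic-closure}, which assumes finite minima rather than proving them.
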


\begin{proof}
Separate the two possible one-segment fits from fits containing at least one changepoint, and then isolate the final segment:
\begin{align}
F_n^\varnothing
&=
\min\Bigg\{
\min_{\psi,\phi}\left\{C_{0,n}(\psi,\phi)+2\beta\right\},
\min_{\phi}\left\{C_{0,n}(\phi,\phi)+\beta\right\},
\notag\\
&\qquad
\min_{\substack{m\geq 1,\ \tau,a,d,\psi,\phi\\ \mathrm{admissible}}}
\Bigg[
\sum_{j=1}^{m}
C_{\tau_{j-1},\tau_j}(\psi_j,\phi_j)
+
\beta_{\mathrm{initial},a_1}
+
\sum_{j=1}^{m-1}\beta_{d_j}
+
C_{\tau_m,n}(\psi_{m+1},\phi_{m+1})
+
\beta_{d_m}
\Bigg]
\Bigg\}.
\label{eq:exactness-full-objective}
\end{align}
Writing $r=\tau_m$, $\psi=\psi_{m+1}$, and $\phi=\phi_{m+1}$ gives
\begin{align}
F_n^\varnothing
&=
\min\Bigg\{
\min_{\psi,\phi}\left\{C_{0,n}(\psi,\phi)+2\beta\right\},
\min_{\phi}\left\{C_{0,n}(\phi,\phi)+\beta\right\},
\notag\\
&\qquad
\min_{\substack{0<r<n\\ \psi,\phi}}
\Bigg[
C_{r,n}(\psi,\phi)
+
\min_{d_m}
\Bigg\{
\min_{\substack{m,\tau,a_{1:m},d_{1:m-1}\\ \psi_{1:m},\phi_{1:m}\\ \tau_m=r\\ \mathrm{admissible\ and}\\ \mathrm{compatible\ with}\ d_m}}
\left[
\sum_{j=1}^{m}
C_{\tau_{j-1},\tau_j}(\psi_j,\phi_j)
+
\beta_{\mathrm{initial},a_1}
+
\sum_{j=1}^{m-1}\beta_{d_j}
\right]
+
\beta_{d_m}
\Bigg\}
\Bigg]
\Bigg\}.
\label{eq:exactness-last-transition}
\end{align}
For fixed $r$, $\psi$, $\phi$, and $d_m$, the innermost minimum is the conditional cost through $r$ required by the final transition. The final segment must be either a line or a constant, and the final transition must take one of the forms in Table~\ref{tab:linear-dictionary}. Therefore
\begin{align}
F_n^\varnothing
=
\min\Bigg\{
&\min_{\psi,\phi}\left\{C_{0,n}(\psi,\phi)+2\beta\right\},
\min_{\phi}\left\{C_{0,n}(\phi,\phi)+\beta\right\},
\notag\\
&\min_{0<r<n}
\Bigg\{
\min_{\psi,\phi}
\Bigg[
C_{r,n}(\psi,\phi)
+
\min\left\{
\begin{aligned}
&F_{r,\mathrm{lin}}^\phi(\psi)+2\beta,\\
&F_{r,\mathrm{const}}^\phi(\psi)+2\beta,\\
&F_r^{\dot\phi}\left(\frac{\phi-\psi}{n-r}\right)+2\beta,\\
&F_r^\varnothing+3\beta
\end{aligned}
\right\}
\Bigg],
\notag\\
&\qquad\qquad
\min_{\phi}
\Bigg[
C_{r,n}(\phi,\phi)
+
\min\left\{
\begin{aligned}
&F_{r,\mathrm{lin}}^\phi(\phi)+\beta,\\
&F_r^\varnothing+2\beta
\end{aligned}
\right\}
\Bigg]
\Bigg\}
\Bigg\}.
\label{eq:exactness-final-decomposition}
\end{align}
The four branches in the first inner minimum are, respectively, a continuous slope change, trend resumption, a level shift preserving slope, and a complete reset to a line. The two branches in the second are trend termination and a reset to a constant.

The one-segment terms initialize the recursion. Applying the same decomposition to each conditional cost at $r$ proves by induction that every admissible story appears in \eqref{eq:exactness-final-decomposition} with its objective value, and every term in \eqref{eq:exactness-final-decomposition} corresponds to an admissible story. Hence the displayed minimum equals \eqref{eq:linear-objective}.
\end{proof}

\section{Pruning the linear recursion}
\label{sec:linear-pruning}

The recursion above is exact, but it generates a new collection of quadratics for every admissible previous boundary and transition type. Without pruning, many distinct candidate stories would be carried forward even though they can no longer participate in an optimal solution, and therefore they generate a vast amount of redundant computation. The essential insight behind CPOP is that such candidates can be compared conditionally. If two preceding fits permit the same outgoing segment and one is no more expensive at every possible shared boundary value, the more expensive fit can never become optimal later: the cheaper fit can replace it without changing the outgoing segment or anything that follows.

GRACE uses the same replacement principle, but the quantity passed from one segment to the next depends on the transition. A continuation may require the preceding fit conditional on boundary level, conditional on slope, or only through its unconditional minimum. Pruning must therefore answer each of these conditional comparisons separately. Functional pruning removes candidates that attain none of the relevant envelopes, while the subsequent pruning rules use transition penalties and bounds on the completed objective to eliminate additional candidates.

\subsection{Functional pruning}

In CPOP, each candidate contributes a quadratic function of the fitted endpoint value $\phi$. The lower envelope records the cheapest accumulated cost at every possible value of $\phi$. A candidate is retained precisely when it attains this envelope for at least one attainable value of $\phi$. If it is more expensive than another compatible candidate everywhere, then any future segment beginning at a value it could supply can instead be attached to the cheaper candidate at the same value. Removing it therefore cannot change the final optimum. This is the central functional-pruning insight of CPOP \cite{cpop}: candidates are judged by conditional optimality over the entire endpoint domain, rather than by their individually minimized costs.

The same argument applies to each GRACE envelope. Within the level-conditioned envelope, candidates that support the same future continuations are compared as functions of $\phi$. Within the slope-conditioned envelope, they are compared as functions of $\dot\phi$. A candidate is removed from an envelope only if it fails to attain that envelope anywhere on its attainable domain. Where candidates tie, at least one candidate capable of supporting the same continuations is retained.

\subsection{\texorpdfstring{$k\beta$}{k-beta} pruning}

The functional-pruning principle used above underlies both Rigaill's pruned dynamic-programming algorithm and CPOP \cite{psn,cpop}. CPOP also employs inequality-based pruning: a candidate whose accumulated cost is sufficiently far above a competing cost can be discarded even when it has not yet been removed by functional pruning. GRACE modifies this second idea to account for the different penalties and boundary relationships associated with different transition types.

For each possible next transition $d_j$ in \eqref{eq:linear-objective}, the recursion collects its legal uses across all retained boundary times $r$. We call this collection a \emph{transition family}. The distinction is necessary because the same time may remain useful for a transition that preserves level after it has become uncompetitive for one that preserves slope, or conversely.

Within each transition family, the recursion minimizes over the earlier boundary $r$. The comparison is conditional on the quantity preserved by the transition: endpoint level for a continuous slope change, slope for a level shift preserving slope, and no continuous quantity for a reset. A particular time may be removed from the family only if it loses throughout the corresponding conditional domain.

Inequality-based pruning strengthens this comparison by accounting for the penalty attached to the next transition. Fix a time $r$. Suppose that every outgoing segment obtainable from $r$ through transition $d_j$ can also be obtained through a legal transition $d_j'$, with exactly the same outgoing segment and the same possibilities thereafter. Under \eqref{eq:type-penalty}, the difference between their penalties is an integer multiple of $\beta$. Write
\begin{equation}
\beta_{d_j'}-\beta_{d_j}=k\beta,
\qquad k\geq 0.
\label{eq:kbeta-difference}
\end{equation}
The continuation using $d_j$ may therefore recover a conditional-cost deficit of at most $k\beta$ through its cheaper transition. If, throughout the conditional domain, the accumulated cost through $r$ for $d_j$ exceeds the corresponding cost for $d_j'$ by more than $k\beta$, then adding the transition penalties still leaves $d_j'$ cheaper. Because the two choices produce the same outgoing segment and the same possibilities thereafter, no later observation can reverse that comparison. The following proposition states the resulting pruning rule.

\begin{proposition}[$k\beta$ pruning]
\label{prop:kbeta}
Fix a boundary time $r$ and a permitted transition $d_j$. Suppose there is another permitted transition $d_j'$ such that every outgoing segment available from $r$ through $d_j$ can also be attached through $d_j'$, with the same fitted values on the outgoing segment and the same permitted continuations thereafter. Let $\beta_{d_j'}-\beta_{d_j}=k\beta$ for some $k\geq0$. If, at every attainable value required by $d_j$, the minimum accumulated cost through $r$ compatible with $d_j$ exceeds the minimum accumulated cost needed to attach the same outgoing segment through $d_j'$ by more than $k\beta$, then the use of transition $d_j$ at time $r$ may be removed without changing the global minimum.
\end{proposition}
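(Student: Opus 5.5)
The proof will be an exchange argument: take a globally optimal story, show that if it uses $d_j$ at time $r$ it can be rewritten to use $d_j'$ there without raising the objective, and conclude that deleting the $d_j$ option leaves at least one optimum, so the minimum of \eqref{eq:linear-objective} is unchanged. The comparison is carried out through the decomposition used in the exactness proof, now split at boundary $r$ instead of at the final segment.

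\textbf{Step 1: split an optimal story at $r$.} Fix any admissible story $\mathcal S$ with parameters $\boldsymbol\theta$ that uses transition $d_j$ at boundary $\tau_j=r$. As in \eqref{eq:exactness-last-transition}, write its objective as
\[
A+\beta_{d_j}+B,
\]
where $A$ collects the segment losses and penalties through $r$, and $B$ collects the losses of the outgoing segment and all later segments with their penalties. The transition $d_j$ depends on the earlier fit only through some inherited quantity $v$ (a level, a slope, or nothing). Let $v^\ast$ be the value of $v$ realised in $\mathcal S$. Then
\[
A\ge G_{d_j}(v^\ast),
\]
where $G_{d_j}$ is the minimum accumulated cost through $r$ compatible with $d_j$, conditional on the inherited quantity.

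\textbf{Step 2: reattach the same outgoing segment through $d_j'$.} By hypothesis the same outgoing segment, with the same fitted values and the same permitted continuations, can be attached through $d_j'$. Let $G_{d_j'}$ be the minimum accumulated cost through $r$ needed for that attachment, evaluated at the value $u^\ast$ that $d_j'$ requires for this outgoing segment ($u^\ast$ may be nothing, as for a reset). Choose an earlier fit attaining, or for the inequality approximating, $G_{d_j'}(u^\ast)$.

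Splicing this earlier fit onto the unchanged tail gives an admissible story. Its boundary relations after $r$ are untouched, and the relation at $r$ holds by the definition of $G_{d_j'}$. Its objective is
\[
G_{d_j'}(u^\ast)+\beta_{d_j'}+B
= G_{d_j'}(u^\ast)+\beta_{d_j}+k\beta+B.
\]

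\textbf{Step 3: compare.} The hypothesis gives $G_{d_j}(v^\ast)>G_{d_j'}(u^\ast)+k\beta$. Combining this with Step 1 shows the spliced story is strictly cheaper than $A+\beta_{d_j}+B$. Hence no optimal story uses $d_j$ at $r$. Removing that option therefore discards only non-optimal stories, and the minimum obtained by the exactness proposition is preserved. Because $B$ is literally identical in both stories, no later observation can change the comparison. This justifies applying the rule at time $r$ in the online recursion, before the future is known.

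\textbf{Main obstacle.} The delicate step is matching the conditioning arguments. The hypothesis is stated "at every attainable value required by $d_j$". I must check that the value $v^\ast$ realised in the optimal story is attainable, and that the value $u^\ast$ needed by $d_j'$ is the one paired with $v^\ast$ in the hypothesis.

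When $d_j$ and $d_j'$ inherit different quantities (for example, slope versus nothing), the pairing must go through the fixed outgoing segment. Given the outgoing segment's parameters, both $v^\ast$ and $u^\ast$ are determined, so the hypothesis should be read pointwise over outgoing segments. I would state this reading explicitly, using \eqref{eq:slope-transform} to pass between level and slope.

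Attainment of the minimum poses no difficulty. The candidates are quadratics, so the minima are attained, and in any case the strict inequality leaves slack for approximate minimisers.
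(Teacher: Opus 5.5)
Your proposal is correct and is essentially the paper's own argument. You fix a story using $d_j$ at $r$, keep its outgoing segment and tail unchanged, and replace the fit through $r$ by the compatible fit for $d_j'$, so the deficit of more than $k\beta$ outweighs the extra penalty $k\beta$; your pairing of $v^\ast$ and $u^\ast$ through the fixed outgoing segment (with attainability of $v^\ast$ immediate, since the story itself realises it) just makes explicit what the paper's short proof leaves implicit.
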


\begin{proof}
Fix any complete story that uses transition $d_j$ at time $r$ and fix its outgoing segment. By assumption, transition $d_j'$ can attach exactly the same outgoing segment and leaves every later continuation unchanged. Replacing the fit through $r$ and transition $d_j$ by the compatible alternative through $d_j'$ reduces the accumulated cost by more than $k\beta$ and increases the transition penalty by exactly $k\beta$. The replacement therefore has smaller total objective. Since the comparison holds throughout the conditional domain, no complete story using $d_j$ at time $r$ can be optimal.
\end{proof}

The integer $k$ thus determines the amount by which a candidate may trail before it can be pruned from a particular family. Transition families with allowance $k\beta$ will be called \emph{$k$-families}. The case $k=0$ gives ordinary dominance: when two legal continuations have the same penalty, the more expensive one may be removed as soon as it loses everywhere on the relevant conditional domain.

For example, a continuous slope change costs $2\beta$, whereas a complete reset to the same outgoing line costs $3\beta$. The continuous transition has a penalty advantage of $\beta$, so here $k=1$. At a fixed time $r$, the reset can replace the continuous transition whenever
\begin{equation}
F_{r,\mathrm{lin}}^\phi(\psi)+2\beta
>
F_r^\varnothing+3\beta
\qquad\text{for every attainable }\psi.
\label{eq:kbeta-level}
\end{equation}
Both sides then append exactly the same outgoing line; they differ only in the fit through $r$ and in the penalty paid at $r$. Thus \eqref{eq:kbeta-level} removes the continuous-slope-change use of that time. The corresponding slope-preserving use may be removed when
\begin{equation}
F_r^{\dot\phi}(\dot\phi)+2\beta
>
F_r^\varnothing+3\beta
\qquad\text{for every attainable }\dot\phi.
\label{eq:kbeta-slope}
\end{equation}
These are the $k=1$ instances of the general comparison. Larger values of $k$ arise whenever the replacing transition introduces several additional parameters.

Any transition may dominate another when it can produce the same outgoing segment and leaves the same later transitions available. The recursion therefore compares each family with every legal alternative capable of replacing it.

Pruning a time from one transition family does not remove it from the others. A time may be eliminated from a $k=1$ family while remaining in a $k=2$ family, since the latter permits a larger recoverable deficit. The time itself is discarded only after every transition family through which it could still be used has been pruned. Minimum segment lengths and all other admissibility restrictions must be satisfied before one transition can serve as a replacement for another.

\subsection{Upper and lower bounds}

A conditional quadratic is formally defined over every attainable value of the exposed vector $\xi$, but many of those values cannot participate in an optimal story. Here and below, $\xi$ denotes precisely the parameter or joint vector of parameters retained as the argument of a conditional envelope. If fixing $\xi$ already makes the accumulated cost so large that even the most favorable possible completion cannot improve a known complete fit, then that value of $\xi$ is irrelevant. Upper and lower bounds identify these values directly.

Let $U$ be the objective attained by any admissible complete story after all of its continuous parameters have been jointly refitted under the selected boundary relationships. Because this story is feasible, the unknown global optimum is no greater than $U$. For a retained candidate ending at time $t$, write $f_t(\xi)$ for its accumulated conditional cost, where $\xi$ contains the quantities that may be carried into later segments. The lowercase $f_t$ denotes one candidate, whereas $F_t$ continues to denote the lower envelope of all compatible candidates. Let $L_t$ be a lower bound on the additional cost required to fit the remaining observations after $t$, valid for every attainable $\xi$. The candidate can improve the known complete fit at a particular value of $\xi$ only if
\begin{equation}
f_t(\xi)+L_t\leq U.
\label{eq:feasible}
\end{equation}
Values violating \eqref{eq:feasible} cannot occur in any completion with objective at most $U$ and may therefore be excluded from further comparisons.

When observation losses and unpaid penalties are nonnegative, $L_t=0$ is always valid. A stronger lower bound can be obtained by solving an easier version of the problem on the remaining observations $y_{t+1:n}$. Cross-boundary preservation requirements may be removed, future segments may be fitted independently, and their penalties may be reduced, provided that every legal completion remains available in the relaxed problem at no greater cost. The minimum of this relaxed problem cannot exceed the cost of any legal completion and therefore supplies a valid value of $L_t$.

The values satisfying \eqref{eq:feasible} are especially easy to locate for a nondegenerate quadratic candidate. Suppose
\[
f_t(\xi)
=
f_t(\widehat{\xi})
+
(\xi-\widehat{\xi})^\top H(\xi-\widehat{\xi}),
\]
where $\widehat{\xi}$ minimizes the quadratic and $H$ is positive definite. Substituting this expression into \eqref{eq:feasible} gives
\begin{equation}
(\xi-\widehat{\xi})^\top H(\xi-\widehat{\xi})
\leq
U-L_t-f_t(\widehat{\xi}).
\label{eq:ellipsoid}
\end{equation}
If the right-hand side is negative, no value of $\xi$ can improve the incumbent and the candidate may be discarded. If it is zero, only $\widehat{\xi}$ remains relevant. If it is positive, the relevant values form an ellipsoid centered at $\widehat{\xi}$, intersected with the candidate's original attainable domain. In one dimension this ellipsoid is an interval; in two dimensions it is an ellipse. Thus the quadratic itself determines precisely which exposed values remain capable of participating in an optimal story.

If the quadratic is only positive semidefinite, the same inequality remains valid but its solution may be unbounded in directions of zero curvature. Such candidates retain their original feasibility conditions and are discarded only when emptiness or dominance can still be certified.

\begin{proposition}[Objective-bound pruning]
\label{prop:bound}
For a retained candidate ending at time $t$, only values of $\xi$ satisfying \eqref{eq:feasible} can belong to a complete story whose objective is no greater than $U$. If no such value exists, the candidate may be discarded while the complete story attaining $U$ is retained. If compatible candidates have no greater conditional cost throughout the remaining region, the candidate may be removed from that envelope.
\end{proposition}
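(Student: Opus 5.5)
The plan is to prove the three assertions of Proposition~\ref{prop:bound} in order, all from one decomposition. The objective of any complete story is the accumulated cost through $t$ plus the cost of its completion after $t$.

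\textbf{Step 1: the feasibility condition.} Fix a complete story whose objective is at most $U$. Suppose it passes through the retained candidate at time $t$ with exposed value $\xi$. Its objective splits as the cost through $t$ plus the cost of its continuation on $y_{t+1:n}$, including all penalties still to be paid. The cost through $t$ is at least $f_t(\xi)$, since $f_t$ is the minimum over histories compatible with that candidate at $\xi$. The continuation is a legal completion from $\xi$, so by the definition of $L_t$ (or by the relaxation argument given before the statement) it costs at least $L_t$. Hence
\[
f_t(\xi)+L_t\le \text{objective}\le U,
\]
which is exactly \eqref{eq:feasible}. Equality of the objective with the sum of the two parts needs care. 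The segment losses are additive over disjoint index ranges, and every penalty is paid at exactly one boundary. The only coupling between the two parts runs through the quantities in $\xi$, which are fixed by conditioning.

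\textbf{Step 2: discarding an empty candidate.} Suppose no attainable $\xi$ satisfies \eqref{eq:feasible}. By Step 1, no story through this candidate has objective at most $U$. Every such story is therefore strictly worse than the incumbent attaining $U$. The incumbent is retained separately: either it is stored explicitly, or the candidates realizing it are distinct from the one being discarded. So the global minimum, which is at most $U$, is still attained after removal. I will remark that the ellipsoid form \eqref{eq:ellipsoid} merely computes this emptiness test when $H\succ0$. For semidefinite $H$ the assertion is the same, only the test is less explicit.

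\textbf{Step 3: removal from an envelope.} Let $R$ be the set of attainable $\xi$ satisfying \eqref{eq:feasible}. Suppose a compatible candidate $g_t$, one supporting the same continuations, satisfies $g_t(\xi)\le f_t(\xi)$ for all $\xi\in R$. Take any story through $f_t$ with objective at most $U$. By Step 1 its exposed value lies in $R$. Apply the replacement principle of functional pruning: swap its history through $t$ for the history realizing $g_t(\xi)$ and keep the same continuation. The result is admissible, by compatibility, and costs no more. Stories through $f_t$ with objective above $U$ cannot be optimal, because the incumbent beats them. Hence some optimal story avoids $f_t$ in that envelope. Removal is from that envelope only, matching the per-envelope treatment of the slope and level envelopes in Section~\ref{sec:linear-pruning}.

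\textbf{Main obstacle.} The delicate point is the tie case in Step 3 combined with the requirement that $U$ be attained. If $U$ itself equals the optimum and $g_t=f_t$ on $R$, removing $f_t$ must not also remove $g_t$. This is handled as in functional pruning by retaining at least one of any tied compatible candidates. A second subtlety is that $L_t$ must be valid uniformly in $\xi$ and must cover continuations that inherit $\xi$. I will invoke the stated requirement that every legal completion remains available in the relaxed problem at no greater cost.
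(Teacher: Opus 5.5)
Your proposal is correct and follows the paper's own argument. Every completion at $\xi$ costs at least $f_t(\xi)+L_t$; the candidate is discarded when no value survives, since the incumbent attains $U$; and it is removed from an envelope by replacement, keeping a representative where ties occur. Your Step 3 states dominance through a single competitor $g_t$, while the statement allows the collective envelope of compatible candidates, but your replacement is pointwise in $\xi$ and so covers that case without change.
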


\begin{proof}
At a retained value $\xi$, every admissible continuation has total objective at least $f_t(\xi)+L_t$. If this exceeds $U$, that value cannot improve the admissible complete story supplying $U$. The candidate may therefore be restricted to the values satisfying \eqref{eq:feasible}; if none remain, it may be discarded. If retained competitors have no greater conditional cost throughout the remaining region, the candidate may be removed from that envelope while preserving a minimizing representative wherever equality occurs.
\end{proof}

In the implementation considered below, these regions are used to establish that a candidate may be deleted, but they are not imposed as new constraints on candidates that survive. Each surviving candidate retains its original quadratic and attainable domain. This leaves some parameter values that can no longer improve $U$, but avoids converting later profiling operations into constrained minimizations over ellipsoidal regions.
\section{The general conditional calculus}
\label{sec:general-calculus}

We now consider models with more continuous parameters: quadratic and cubic trends, quadratic or cubic splines with changepoints as knots, mixtures of linear, quadratic, and cubic pieces, regression models with several covariates, and trends accompanied by seasonal components. The permitted segment forms and boundary relationships are declared in advance. A transition may preserve all, some, or none of the relevant boundary quantities, estimate other outgoing parameters afresh, or fix them at prescribed values. GRACE jointly selects the segment forms, changepoint locations, transition types, and continuous parameters under the objective \eqref{eq:typed-objective}.

The preceding development from CPOP identifies the information this optimization must retain. CPOP exposes endpoint level because a continuous transition inherits that level. A level shift preserving slope requires an additional envelope exposing slope. With richer models, a transition may inherit several quantities together, requiring a joint higher-dimensional envelope. If different transitions preserve different collections of quantities, the recursion must maintain the corresponding collection of envelopes. The same conditional-cost principle therefore extends to richer models once the information required by each transition has been identified.

\subsection{Boundary information and conditional envelopes}

For example, a quadratic segment has a boundary level, slope, and curvature. A curvature change preserving level and slope requires the joint conditional cost
\[
F_t^{\phi,\dot\phi}(\phi,\dot\phi).
\]
A level shift preserving slope and curvature instead requires
\[
F_t^{\dot\phi,\ddot\phi}(\dot\phi,\ddot\phi),
\]
while a slope change preserving level and curvature requires
\[
F_t^{\phi,\ddot\phi}(\phi,\ddot\phi).
\]
A dictionary containing these three transitions therefore requires three distinct two-dimensional envelopes. For a cubic spline whose pieces join with continuous value, first derivative, and second derivative, the corresponding envelope is
\[
F_t^{\phi,\dot\phi,\ddot\phi}
(\phi,\dot\phi,\ddot\phi).
\]
Persistent seasonal or regression coefficients must likewise remain exposed jointly with whichever other quantities a transition preserves.

These joint envelopes cannot generally be reconstructed from separate scalar envelopes. The minimum cost at a specified level and the minimum cost at a specified slope may be attained by different fitted stories. Neither scalar envelope records the minimum cost subject to both values simultaneously. More generally, preserving several quantities together requires conditioning on their joint value.

Throughout this section, $\theta$ denotes the complete continuous parameter vector of a proposed segment, while $\xi$ denotes the scalar or joint vector exposed by a particular conditional envelope. We write $F_t(\xi)$ for the minimum accumulated cost through $t$ among compatible fitted stories supplying that specified value, with all remaining continuous parameters minimized out. The choice $\xi=\varnothing$ denotes an unconditional minimum. Separate collections are retained whenever the current segment form or other discrete information changes which continuations are admissible; these discrete indices are suppressed below.

The conditional dimension is determined by the independently varying quantities retained across a boundary. A segment may contain many coefficients while a particular transition inherits only one of them, or several affine combinations of them. Conversely, a shared seasonal component may require a large conditional vector even when the local trend is simple.

Every generated candidate must be profiled into all conditional envelopes required by its permitted continuations before any of those profiles is pruned. A candidate that contributes nowhere to one envelope may still be indispensable to another. This is the same requirement encountered for the separate level and slope envelopes in the linear construction.

\subsection{The general recursion and its exactness}

Consider a proposed segment on $(r,t]$ with parameter vector $\theta$ and a permitted transition $d$ at $r$. Once $\theta$ is specified, the transition determines the value $\xi_r$ that the preceding fit must supply. Evaluating the appropriate compatible envelope through $r$ at that value gives the extension cost
\begin{equation}
F_r(\xi_r)+C_{r,t}(\theta)+\beta_d.
\label{eq:general-extension}
\end{equation}
Here $F_r$ denotes the envelope required by $d$: it may condition on level, slope, a joint vector, or no continuous quantity. The notation $C_{r,t}(\theta)$ suppresses the segment-form superscript in \eqref{eq:typed-objective}.

Now fix the conditional envelope to be constructed at $t$, and let $\xi_t$ denote its exposed value. The roles of $\xi_r$ and $\xi_t$ are distinct. The transition at $r$ determines what the outgoing segment inherits; the chosen envelope at $t$ determines what remains exposed for a possible later transition. These vectors need not contain the same quantities or have the same dimension. Both are determined from the proposed segment parameters, and holding $\xi_t$ fixed restricts the minimization to parameter values producing that specified outgoing value.

The general recursion is
\begin{equation}
F_t(\xi_t)
=
\min
\left\{
\begin{aligned}
&\min_{\substack{\text{admissible initial segment}\\
                 \xi_t\ \mathrm{fixed}}}
\left\{
C_{0,t}(\theta)+\beta_{\mathrm{initial}}
\right\},\\
&\min_{\substack{0<r<t\\ d\ \mathrm{permitted}}}
\;
\min_{\substack{\text{admissible outgoing segment}\\
                 \xi_t\ \mathrm{fixed}}}
\left\{
F_r(\xi_r)+C_{r,t}(\theta)+\beta_d
\right\}.
\end{aligned}
\right.
\label{eq:general-recursion}
\end{equation}
The segment minimizations include the choice of permitted segment form and its continuous parameters. They impose the selected boundary relationships, any fixed outgoing quantities, minimum segment lengths, and all other admissibility conditions. The incoming envelope is restricted to compatible discrete states, and an infeasible minimization has value $+\infty$.

Equation~\eqref{eq:general-recursion} is evaluated for every required conditional envelope and compatible discrete state. It adds the earlier conditional cost to the outgoing segment loss before minimizing over quantities not exposed at $t$, allowing observations on both sides of a boundary to estimate their shared parameters jointly. At the terminal endpoint no continuous quantity needs to remain exposed, so the final objective is the minimum over all admissible terminal costs.

The earlier recursions are special cases. Taking $\xi_r=\xi_t=\varnothing$ gives ordinary optimal partitioning. For piecewise-constant segments separated by level resets, taking $\xi_r=\varnothing$ and $\xi_t=\phi$ gives the level-conditioned cost underlying FPOP. For linear segments joined continuously, taking $\xi_r=\psi$ and $\xi_t=\phi$ gives the CPOP recursion in \eqref{eq:cpop-recursion}. Allowing transitions to inherit level, slope, or neither gives the first-order GRACE recursion in \eqref{eq:linear-update}--\eqref{eq:constant-update}. Joint preservation uses the corresponding higher-dimensional envelopes without changing the recursion.

\begin{theorem}[Exactness of the general recursion]
\label{thm:exact}
Suppose that segment and transition costs are additive. For each proposed outgoing segment, suppose that its inherited vector, together with the discrete information retained by the recursion, contains all information from the preceding fit needed to determine the admissibility and cost of that segment and any later continuation. If \eqref{eq:general-recursion} considers every admissible initial segment, previous boundary, outgoing segment form, and permitted transition, retains the joint conditional envelope for every vector a later transition may require, and performs every required conditional minimization globally with its minimum attained whenever feasible, then it attains the global minimum of \eqref{eq:typed-objective}.
\end{theorem}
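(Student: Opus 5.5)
The plan is to prove exactness by induction on the endpoint $t$, with an invariant that identifies each computed envelope with a conditional minimum over complete partial stories. For each $t$, each compatible discrete state $s$ (segment form and any other retained discrete index), and each required exposed vector $\xi_t$, let $G_t^{s}(\xi_t)$ denote the infimum of the partial objective
\[
\sum_{j} C_{\tau_{j-1},\tau_j}^{a_j}(\theta_j)+\beta_{\mathrm{initial},a_1}+\sum_{j}\beta_{d_j}
\]
over all admissible stories on $(0,t]$ whose final discrete state is $s$ and whose final segment produces the value $\xi_t$. The claim to establish is $F_t^{s}(\xi_t)=G_t^{s}(\xi_t)$ for every $t$, $s$, $\xi_t$, including $\xi_t=\varnothing$. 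The theorem then follows at $t=n$: with $\xi_n=\varnothing$, minimizing over terminal states gives exactly \eqref{eq:typed-objective}.

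For the inductive step I will mirror the proof of the first-order proposition. Any admissible story ending at $t$ either has a single segment, which is covered by the first branch of \eqref{eq:general-recursion}, or has a last boundary $r=\tau_m$, last transition $d=d_m$ and final parameter $\theta$. Additivity splits its objective into a prefix cost on $(0,r]$, the term $C_{r,t}(\theta)$ and the penalty $\beta_d$. Fix $(r,d,\theta)$. The sufficiency hypothesis says that the inherited vector $\xi_r$ together with the retained discrete state determines both whether the prefix is compatible with $d$ and $\theta$ and all later admissibility and cost. Hence the prefixes compatible with this final segment are precisely the admissible stories on $(0,r]$ in the required discrete state that supply $\xi_r$. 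Minimizing the prefix cost over them gives $G_r(\xi_r)$, and by the induction hypothesis this equals $F_r(\xi_r)$; the hypothesis that the joint envelope is retained for every vector a later transition needs guarantees that this $F_r$ exists. The order of minimization can be exchanged because every term is a minimum over a union of sets: first over $(r,d)$, then over $\theta$ with $\xi_t$ fixed, then over prefixes. This yields the second branch of \eqref{eq:general-recursion}. In the other direction, every $(r,d,\theta)$ considered in the recursion, combined with any prefix attaining or approaching $F_r(\xi_r)$, is an admissible story with the stated cost. The two infima therefore coincide.

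The step I expect to be the main obstacle is making the phrase ``contains all information needed'' precise enough to justify the prefix/suffix exchange, together with replacing infima by minima. I would formalize the hypothesis as a Markov property: the set of admissible continuations and their costs depend on the prefix only through $(s,\xi_r)$. Under this property, swapping one prefix for another with the same $(s,\xi_r)$ preserves admissibility of the whole story, which is the same replacement argument used for Proposition~\ref{prop:kbeta}. Attainment is needed to reconstruct an optimal story rather than only the optimal value. I would obtain it by induction from the hypothesis that every required conditional minimization attains its minimum whenever it is feasible, and by storing minimizers along the way. Infeasible branches contribute $+\infty$ and do not affect the result.

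Finally, I would note that pruning is not part of this statement. The theorem concerns the unpruned recursion, and Propositions~\ref{prop:kbeta} and~\ref{prop:bound} separately guarantee that pruning preserves the minimum.
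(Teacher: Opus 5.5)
Your proposal is correct and follows essentially the same route as the paper. Both argue by induction over $t$ on the invariant that each envelope equals the minimum cost among compatible stories supplying its exposed value. Both split off the final segment and use the replacement argument to justify substituting $F_r(\xi_r)$ for the prefix, and both include the converse direction. Your version spells out the sufficiency hypothesis as a Markov property and handles attainment separately, which states more precisely what the paper's terser proof leaves implicit.
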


\begin{proof}
Take any admissible story whose final segment begins at $r$ through transition $d$. Once the outgoing segment is fixed, its dependence on the preceding fit is completely specified by the inherited value $\xi_r$ and the compatible discrete state. Replacing that preceding fit by a minimizer represented in $F_r(\xi_r)$ cannot increase the cost and leaves the outgoing segment and every later continuation admissible. The minimum cost of this extension is therefore \eqref{eq:general-extension}.

The recursion considers every admissible way to make such an extension and minimizes over the outgoing parameters subject to the specified value of $\xi_t$. Conversely, every feasible term in the recursion joins a compatible preceding fit to an admissible outgoing segment through a permitted transition. Starting from the initial-segment terms, induction over $t$ establishes that each envelope equals the minimum cost among all compatible stories supplying its exposed value. The unconditional terminal minimum consequently gives the global minimum of \eqref{eq:typed-objective}.
\end{proof}

Algorithm~\ref{alg:general-grace} gives the corresponding order of computation. Each retained incoming candidate is extended through every compatible transition, and every generated candidate is profiled into all conditional envelopes required by its permitted continuations before pruning is applied.

\begin{algorithm}[H]
\caption{General GRACE recursion}
\label{alg:general-grace}
\KwIn{Observations; permitted segment models and transition types; penalties; admissibility conditions.}
Identify every scalar or joint vector that a permitted transition may inherit\;
Create the corresponding conditional envelopes, separated by any required discrete states, together with the necessary unconditional costs\;
\For{$t=1,\ldots,n$}{
  Generate every admissible initial-segment candidate, including its initial penalty\;
  \For{each admissible $r<t$, transition $d$, and outgoing segment form}{
    Select the compatible incoming envelope required by $d$\;
    Determine the inherited value $\xi_r$ required by transition $d$ from the outgoing segment parameters $\theta$\;
    \For{each retained candidate $f_r$ represented in that envelope}{
      Form $f_r(\xi_r)+C_{r,t}(\theta)+\beta_d$\;
      Retain the generated candidate with its segment and boundary constraints if feasible\;
    }
  }
  \For{each generated candidate}{
    \For{each compatible conditional envelope required at $t$}{
      Hold $\xi_t$ fixed and minimize over the remaining continuous parameters subject to the candidate's constraints\;
      Add the resulting profile to that envelope and store the choices needed for traceback\;
    }
    Minimize over all continuous parameters when an unconditional cost is required\;
  }
  Apply valid pruning rules separately to the completed envelopes\;
}
Select the minimum admissible unconditional cost at $n$ and reconstruct its story and fitted parameters\;
\KwOut{The globally minimizing story, changepoint locations and types, and jointly fitted parameters.}
\end{algorithm}

The recursion and Algorithm~\ref{alg:general-grace} do not require quadratic losses or affine boundary maps. With more general losses or boundary relationships, the construction remains exact under Theorem~\ref{thm:exact}, although its conditional minimizations and envelope calculations may require different methods. We next identify a broad class of models for which every candidate extension and conditional minimization preserves a quadratic representation.

\subsection{Affine updates and quadratic closure}

For the quadratic construction, suppose that the segment loss is convex quadratic in $\theta$ and that the boundary quantities are affine functions of $\theta$. In particular, the value required from the preceding fit has the form
\begin{equation}
\xi_r=A_{r,t}\theta+b_{r,t},
\label{eq:affine-update}
\end{equation}
where the matrix and offset depend on the proposed segment form and transition; these indices are suppressed. The chosen exposed value $\xi_t$ is likewise affine in $\theta$. Quantities fixed by a transition and other continuous feasibility restrictions are imposed through affine equalities.

These conditions cover least-squares models whose fitted values are linear in their unknown coefficients. The basis functions may include polynomials, observed covariates, fixed-knot spline bases, seasonal indicators, Fourier terms at prescribed frequencies, or exponentials with fixed rates. Such components may also be combined. Their dependence on time may be nonlinear while the segment loss remains quadratic in the unknown coefficients.

Polynomial boundary updates illustrate the construction directly. Write a cubic segment in endpoint coordinates as
\[
m_\theta(i)
=
\phi+\dot\phi(i-t)
+\frac{1}{2}\ddot\phi(i-t)^2
+\frac{1}{6}\phi^{(3)}(i-t)^3,
\qquad
\theta=(\phi,\dot\phi,\ddot\phi,\phi^{(3)})^\top.
\]
For $h=t-r$, a transition imposing continuity of value, first derivative, and second derivative requires
\[
\xi_r
=
\begin{pmatrix}
m_\theta(r)\\
m_\theta'(r)\\
m_\theta''(r)
\end{pmatrix}
=
\begin{pmatrix}
1 & -h & h^2/2 & -h^3/6\\
0 & 1 & -h & h^2/2\\
0 & 0 & 1 & -h
\end{pmatrix}
\theta.
\]
Thus a cubic segment has four coefficients, while this transition inherits three quantities jointly. Fixing the highest derivatives at zero gives lower-order pieces, and selecting different boundary quantities gives different permitted transitions. Shared regression or seasonal coefficients enter the inherited vector in the same way.

\begin{proposition}[Quadratic closure]
\label{prop:quadratic-closure}
Suppose that segment losses are convex quadratic, all continuous feasibility restrictions for a fixed story are affine equalities, and exposed quantities are affine functions of the segment parameters. If each conditional minimization has a finite minimum on its attainable domain, every candidate profile generated by the recursion is convex quadratic on an affine attainable domain, with value $+\infty$ outside that domain. The result permits singular quadratic Hessians and nonunique conditional minimizers.
\end{proposition}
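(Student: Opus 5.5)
The proof goes by induction over $t$, following the order of operations in \eqref{eq:general-recursion}. The invariant is that every candidate stored in any envelope at time $r$ is an \emph{extended convex quadratic}: a function $f(\xi)=q(\xi)$ for $\xi$ in an affine set $\mathcal A\subseteq\mathbb R^k$ and $+\infty$ otherwise, with $q$ a convex quadratic (possibly singular Hessian). Unconditional costs ($\xi=\varnothing$) are the degenerate case of constants. Two closure lemmas suffice: such functions are preserved under affine precomposition and addition, and under partial minimization along an affine map when the minimum is finite.

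\emph{Base case.} An initial candidate is $C_{0,t}(\theta)+\beta_{\mathrm{initial}}$, restricted to the affine set cut out by its fixed-quantity equalities. This is convex quadratic in $\theta$ on an affine domain. Its profile at $\xi_t=E\theta+e$ is then covered by the minimization lemma below.

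\emph{Extension.} For a retained $f_r$ on domain $\mathcal A_r$ and transition $d$ with $\xi_r=A\theta+b$, the function $\theta\mapsto f_r(A\theta+b)$ equals $q_r(A\theta+b)$ on the preimage $\{\theta:A\theta+b\in\mathcal A_r\}$. That preimage is affine, possibly empty, in which case the candidate is infeasible with value $+\infty$ and is dropped. The composition $q_r(A\theta+b)$ is convex quadratic, since its Hessian $A^\top H_rA$ is positive semidefinite. Adding $C_{r,t}(\theta)+\beta_d$ and intersecting with the story's affine equality constraints gives a convex quadratic $g(\theta)$ on an affine set $\mathcal B$.

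\emph{Profiling.} This is the main step. We need to show that
\[
p(\xi)=\min\{g(\theta):\theta\in\mathcal B,\ E\theta+e=\xi\}
\]
is an extended convex quadratic. The attainable domain is $E\mathcal B+e$, the affine image of an affine set, so it is affine.

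Parametrize $\mathcal B=\{\theta_0+Nz\}$ and write $g(\theta_0+Nz)=z^\top Pz+c^\top z+\kappa$ with $P\succeq0$. The constraint is $ENz=\xi-E\theta_0-e$. Its solutions form $z=z_p(\xi)+Mw$, where $z_p$ is a fixed linear right-inverse applied to $\xi$ (for instance via the pseudoinverse) and $M$ spans $\ker(EN)$. The profile is therefore an unconstrained minimization over $w$ of a convex quadratic whose linear coefficient depends affinely on $\xi$.

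Since $M^\top PM\succeq0$, the minimum is finite exactly when the linear coefficient $M^\top(2Pz_p(\xi)+c)$ lies in $\operatorname{range}(M^\top PM)$. By hypothesis the minimum is finite on the attainable domain. That leaves two cases:
\begin{itemize}
\item If the condition fails at some $\xi$, then at that $\xi$ the objective is unbounded below, so the problem is unbounded, contradicting the hypothesis.
\item Otherwise, with the pseudoinverse $(M^\top PM)^{+}$, the minimizer $w^\ast(\xi)$ is affine in $\xi$. Substituting it back gives a quadratic in $\xi$ — a Schur complement, with Hessian $N_\xi^\top(P-PM(M^\top PM)^{+}M^\top P)N_\xi\succeq0$ in suitable coordinates. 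It is convex because partial minimization of a jointly convex function is convex.
\end{itemize}
Nonuniqueness of minimizers corresponds to $w$ ranging over $\ker(M^\top PM)$. This does not affect the value, so singular Hessians and nonunique minimizers are permitted, as the statement says.

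The main obstacle I expect is the careful handling of the finiteness condition and the singular case. One must check that the range condition holds uniformly in $\xi$: it is an affine condition on $\xi$, so it holds either on all of the affine domain or fails somewhere, and by hypothesis it does not fail. This ensures the profile is a single quadratic rather than piecewise. Taking pointwise minima across candidates is not part of the claim, since the proposition concerns individual candidate profiles, which envelopes store separately. Induction over $t$ then finishes the proof.
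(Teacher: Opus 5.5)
Your proposal is correct and follows the same route as the paper's proof. Affine substitution and addition preserve convex quadratic form; finiteness of each fiber minimum forces the affine linear coefficient into the range of the possibly singular Hessian; the pseudoinverse formula then gives a quadratic in $\xi$ that is convex by partial minimization, on a domain that is the affine image of an affine set; and induction over extensions finishes the argument.

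Your explicit fiber parametrization $z=z_p(\xi)+Mw$ with Hessian $M^\top PM$ spells out what the paper compresses into writing the cost as $z^\top Hz+2z^\top\ell(\xi)+q(\xi)$. The only wording to tidy is that $z_p$ is affine rather than linear in $\xi$, and $(EN)^{\dagger}$ gives a solution only when $\xi-E\theta_0-e\in\operatorname{range}(EN)$, that is, on the attainable domain.
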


\begin{proof}
Affine substitution and addition preserve convex quadratic form. After imposing the affine equalities, write the remaining cost at an attainable exposed value $\xi$ as
\[
z^\top H z+2z^\top\ell(\xi)+q(\xi),
\]
where $z$ contains the free coordinates, $H$ is positive semidefinite, $\ell$ is affine, and $q$ is quadratic. Finiteness implies $\ell(\xi)\in\operatorname{range}(H)$, so minimizing over $z$ gives
\[
q(\xi)-\ell(\xi)^\top H^\dagger\ell(\xi),
\]
where $H^\dagger$ is the Moore--Penrose pseudoinverse. This expression is quadratic, and partial minimization preserves convexity. The attainable domain is affine because it is the affine image of an affine feasible set. Induction over candidate extensions proves the claim.
\end{proof}

The proposition applies to each candidate profile. The lower envelope of those profiles need not itself be convex or quadratic.

Theorem~\ref{thm:exact} establishes exact optimization of the declared objective, while Proposition~\ref{prop:quadratic-closure} provides a closed representation of its candidate costs in the quadratic setting. Neither result guarantees polynomial computational complexity or practical tractability for every transition dictionary. Computational cost depends on the dimensions of the required envelopes and the numbers of surviving candidates. Conservative pruning preserves exactness by retaining candidates whose deletion cannot be certified. The following example shows how several higher-dimensional envelopes arise naturally when seasonal structure persists across changes in a polynomial trend.

\subsection{Polynomial trends with persistent seasonality}

Many time series of practical interest contain a recurring seasonal pattern together with a trend whose level, rate of growth, or curvature changes at isolated times. Monthly energy demand, retail activity, hospital admissions, and environmental measurements may all exhibit a stable annual cycle while interventions or structural changes alter the underlying trend. Such data call for a model that pools information about seasonality across the entire series while allowing the polynomial trend and its boundary relationships to change locally.

For a known period $P$, consider
\begin{equation}
y_i
=
T_j(i)+\pi_{\mathrm{phase}(i)}+\epsilon_i,
\qquad
\tau_{j-1}<i\leq\tau_j,
\label{eq:seasonal-model}
\end{equation}
where $T_j$ is a constant, linear, or quadratic trend on regime $j$ and $\mathrm{phase}(i)\in\{1,\ldots,P\}$ gives the position of observation $i$ within the seasonal cycle. The vector
\[
\pi=(\pi_1,\ldots,\pi_P),
\qquad
\sum_{p=1}^{P}\pi_p=0,
\]
is shared across regimes. The zero-sum constraint separates the seasonal offsets from the trend intercept and leaves $P-1$ free seasonal coefficients. The trend may change at a changepoint, but the same seasonal phase continues to have the same effect before and after it.

For a quadratic trend parameterized at the right endpoint $t$, the segment loss is
\begin{equation}
C_{r,t}(\phi,\dot\phi,\ddot\phi,\pi)
=
\sum_{i=r+1}^{t}
\left\{
y_i-\phi-\dot\phi(i-t)-\frac{1}{2}\ddot\phi(i-t)^2-\pi_{\mathrm{phase}(i)}
\right\}^{2}.
\label{eq:seasonal-cost}
\end{equation}
A linear regime is obtained by fixing $\ddot\phi=0$, and a constant regime by fixing both $\dot\phi=0$ and $\ddot\phi=0$. After one seasonal coefficient has been eliminated through the zero-sum constraint, \eqref{eq:seasonal-cost} is quadratic in
\[
\theta
=
(\phi,\dot\phi,\ddot\phi,\pi_1,\ldots,\pi_{P-1}).
\]
The common seasonal vector can therefore be estimated jointly with the changing polynomial trend while preserving the quadratic structure required by the recursion.

Because $\pi$ is shared across regimes, it must remain exposed whenever the recursion crosses a changepoint. The transition type then determines which trend quantities are retained alongside it. Table~\ref{tab:seasonal-types} gives representative transitions for constant, linear, and quadratic trends with persistent seasonality.

\begin{table}[ht]
\centering
\small
\begin{tabular}{p{0.47\linewidth}p{0.40\linewidth}}
\toprule
Trend transition & Quantities retained jointly\\
\midrule
Level shift preserving slope and curvature & $\dot\phi,\ddot\phi,\pi$\\
Slope change preserving level and curvature & $\phi,\ddot\phi,\pi$\\
Curvature change preserving level and slope & $\phi,\dot\phi,\pi$\\
Trend termination & $\phi,\pi$; outgoing derivatives fixed at zero\\
Trend resumption & $\phi,\pi$\\
Complete trend reset & $\pi$\\
\bottomrule
\end{tabular}
\caption{Representative trend transitions when a common seasonal pattern persists across regimes.}
\label{tab:seasonal-types}
\end{table}

Each row in Table~\ref{tab:seasonal-types} gives an affine update of the form \eqref{eq:affine-update}. For example, a level shift preserving slope and curvature carries $(\dot\phi,\ddot\phi,\pi)$ across the boundary while estimating a new level. A complete trend reset releases the level and derivatives but still retains $\pi$, since the seasonal pattern remains common to the entire series. Theorem~\ref{thm:exact} therefore applies directly.

This example also shows why higher-dimensional envelopes arise in models of substantive interest. A transition preserving two trend quantities together with seasonality requires a conditional envelope over
\[
2+(P-1)=P+1
\]
quantities. Even a complete reset of the polynomial trend requires a $(P-1)$-dimensional envelope over $\pi$. Fitting seasonality independently within each segment would avoid this larger conditional space, but it would estimate a different seasonal pattern after every trend change and would no longer solve the persistent-seasonality model in \eqref{eq:seasonal-model}.

The model can be enlarged when changes in seasonality are themselves scientifically plausible. A seasonal reset introduces a new vector $\pi$ while preserving the desired trend quantities, whereas the disappearance of seasonality fixes the outgoing seasonal coefficients at zero. A simultaneous reset of trend and seasonality inherits no continuous quantity. These possibilities require additional transitions, but no alteration of the general recursion.

Figure~\ref{fig:overview} illustrates the persistent-seasonality model for a quadratic trend with period $12$. GRACE estimates the seasonal pattern from the entire series while allowing the trend to move among constant, linear, and quadratic regimes. It locates four of the five generating changepoints exactly and places the remaining changepoint one observation late. The selected transitions are, successively, a level shift preserving slope, trend termination, trend resumption, the introduction of curvature, and a level shift preserving both slope and curvature. The recurring phase effects are absorbed by the common seasonal vector, leaving the selected changepoints to describe structural changes in the trend.

\begin{figure}[!htbp]
\centering
\paperfigure{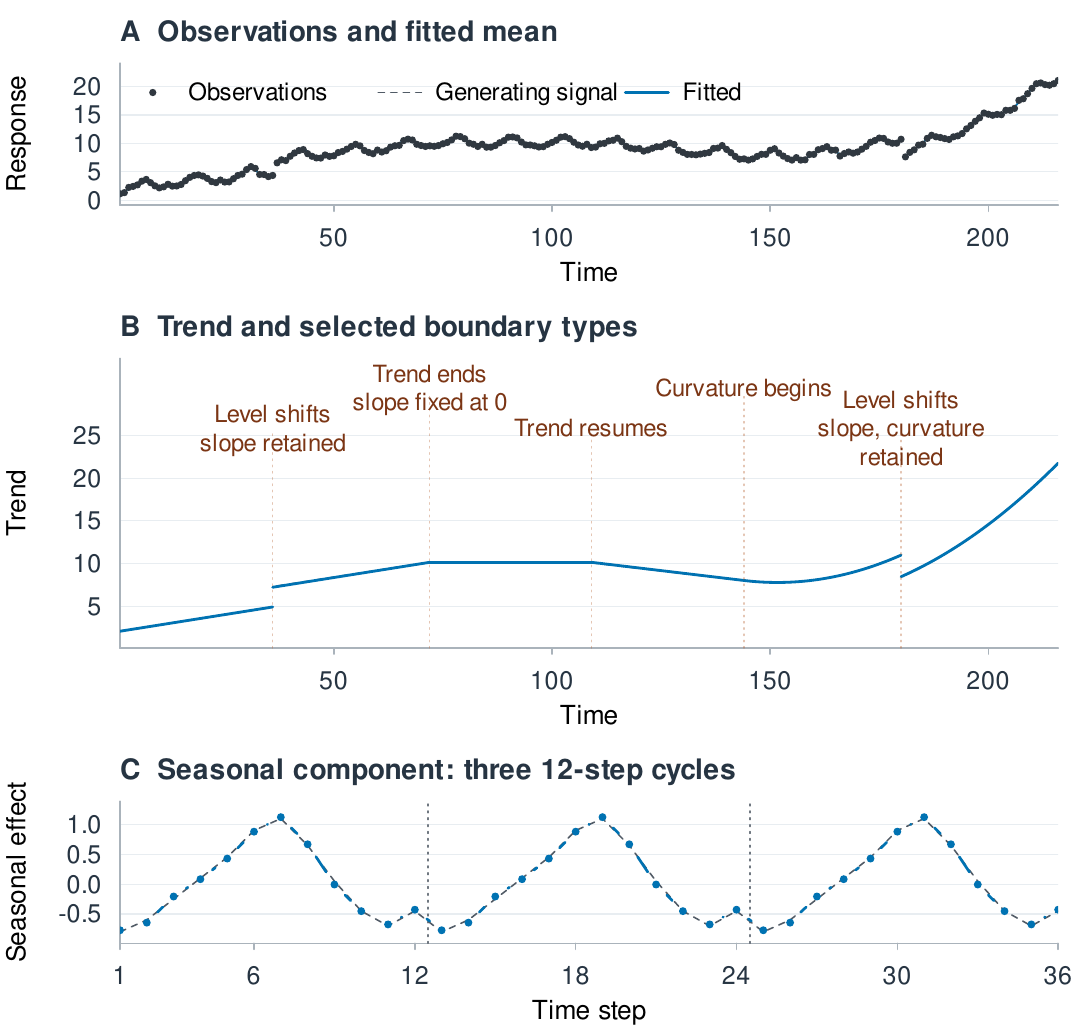}{2.4in}
\caption{Joint estimation of a changing polynomial trend and persistent seasonality. The series contains 216 observations with independent Gaussian noise of standard deviation $0.15$, a period-12 seasonal component, and six generating regimes of length 36. The fitted model permits constant, linear, and quadratic trends while estimating one seasonal vector shared across all regimes. The displayed fit was obtained using the full exact GRACE algorithm with minimum segment length 35. The panels show the observations and fitted mean, the fitted trend and selected transition types, and the estimated seasonal pattern repeated across three cycles.}
\label{fig:overview}
\end{figure}
\FloatBarrier

Persistent seasonality thus provides a concrete use for the general recursion: the same recurring structure is estimated across the full series while different changepoint types compete to explain changes in the polynomial trend. The dynamic-programming identity is unchanged, although the seasonal coefficients enlarge every conditional envelope in which they must be retained. The next section develops pruning and computational strategies for these higher-dimensional envelopes.
\section{Pruning and computation in higher dimensions}
\label{sec:computation}

The three pruning arguments developed in Section~\ref{sec:linear-pruning} continue to apply when an envelope retains several quantities jointly. Functional pruning asks whether a candidate attains the minimum anywhere on the relevant conditional domain. The $k\beta$ rule compares one permitted continuation with another that can produce the same outgoing segment and preserve every relationship required thereafter. Objective-bound pruning asks whether any admissible completion of the candidate can improve the best complete story already found.

Their mathematical justification is unchanged, but the geometry becomes more difficult as the number of retained quantities increases. For a positive-definite quadratic conditional cost, the values that can still improve the incumbent lie in the ellipsoid defined by \eqref{eq:ellipsoid}. If this ellipsoid does not intersect the candidate's attainable domain, the candidate may be discarded. Otherwise, functional pruning requires determining whether the candidate is cheapest anywhere within the intersection. Positive-semidefinite candidates may instead have unbounded feasible directions and must retain the corresponding feasibility conditions.

Each comparison between two quadratic candidates divides the conditional space by a quadratic surface. To prove that a candidate never attains the envelope, one may therefore need to show that its attainable domain, its objective-feasible ellipsoid, and all regions on which it defeats the competing candidates have no common point. In one dimension these sets reduce to intervals and their intersections can be calculated directly. In higher dimensions the regions may have curved boundaries, may fail to be convex, and may intersect in complicated ways.

An implementation may consequently apply these pruning rules conservatively. A candidate is removed only when its irrelevance can be certified over its entire remaining domain. Failure to establish deletion leaves an unnecessary candidate in the recursion and increases computation, but it cannot change the selected optimum. The difficulty lies in proving that the relevant quadratic regions have empty intersection, rather than in the validity of the pruning arguments.

Every comparison must also retain the information required by the continuation being considered. Candidates can be compared within an envelope only when they condition on the same quantities. For example, a candidate that has already minimized over the seasonal coefficients cannot replace one whose future continuation must inherit those coefficients. Similarly, $k\beta$ pruning requires an alternative transition that can produce the same outgoing segment while satisfying the same boundary relationships, segment-length requirements, and other restrictions. If a comparison eliminates only one kind of continuation, the candidate is removed from that use and remains available wherever the comparison does not apply.

\subsection{Obtaining a useful upper bound}

Any admissible complete story provides an upper bound $U$ on the optimal objective. We obtain such a story using a profiled version of GRACE, called \emph{First GRACE}, in which every candidate is profiled over endpoint level. First GRACE is a separate, deliberately reduced preliminary search. The exact first-order recursion of Section~\ref{sec:linear-grace} retains both level and slope, whereas First GRACE retains only level and is used solely to propose a complete story.

At each endpoint $t$, First GRACE considers the permitted segment forms and transition types but conditions only on endpoint level. For every generated candidate, all continuous quantities other than $\phi$---including slope, curvature, and seasonal coefficients---are minimized at fixed $\phi$. When a new segment is attached, the boundary relationship is imposed using the conditional minimizers stored with the retained level candidate, and the resulting extension is profiled back over endpoint level. Separate level-conditioned collections are retained when the form of the current segment changes which transitions may follow, just as the first-order recursion distinguishes $F_{t,\mathrm{lin}}^\phi$ from $F_{t,\mathrm{const}}^\phi$. First GRACE therefore propagates a single conditional coordinate, endpoint level, rather than the several scalar and higher-dimensional coordinates required by the exact recursion.

The proposed story is obtained by minimizing over endpoint level and over all admissible terminal collections.
Because First GRACE has profiled out quantities that a later transition may need to inherit, this story need not minimize the full objective. A candidate that never attains the level envelope may still have been optimal conditional on a particular slope, curvature, seasonal vector, or joint collection of quantities.

The preliminary objective is therefore not itself used as the upper bound. Instead, the changepoint locations, transition types, and segment forms selected by First GRACE are held fixed, and all continuous parameters are refitted jointly under the original boundary relationships. If this refit is admissible, its evaluated objective supplies a valid upper bound $U$. First GRACE is used only to select a promising complete story; the bound is rigorous because it is calculated after that story has been refitted under the full model.

\subsection{Lower bounds on the remaining cost}

A lower bound $L_t$ on the additional cost after time $t$ can be obtained by relaxing the continuation problem. Future segments may be fitted independently, preservation requirements across future boundaries may be removed, and future penalties may be omitted or reduced. The resulting optimum is a valid lower bound whenever every admissible continuation after $t$ is represented in the relaxed problem at no greater cost. The choice $L_t=0$ already preserves exactness when all remaining losses and penalties are nonnegative; stronger bounds are optional accelerators.

Such a relaxation can often be solved by an ordinary segmentation dynamic program working backward from $n$. If the continuations available after $t$ depend on the regime reaching that time, separate lower bounds may be calculated for the relevant regimes. A tighter value of $L_t$ can eliminate more candidates through \eqref{eq:feasible}, although computing it may require a more expensive preliminary calculation.

This construction provides a general route to valid lower bounds rather than a definitive solution for every transition dictionary. Developing bounds that are simultaneously tight, inexpensive to compute, and sensitive to the permitted boundary relationships remains an active area of research. The framework above identifies the condition required for validity and indicates how such bounds enter the exact recursion; designing the most effective relaxation for particular model classes is left for future work.
\subsection{Parallelization}

The recursion proceeds sequentially over endpoints $t=1,\ldots,n$, since the envelopes at time $t$ depend on those constructed at earlier endpoints. Once the earlier envelopes are available, most calculations within a fixed endpoint are independent.

Extensions from different previous boundaries $r<t$ can be generated independently. For each $r$, the permitted transition types can be considered independently, and each retained incoming quadratic can be extended independently within a transition type. Every resulting candidate can then be profiled independently into each conditional envelope required by a possible later transition. The principal inner calculation therefore decomposes over
\[
\{\text{previous boundary}\}
\times
\{\text{transition type}\}
\times
\{\text{incoming candidate}\}
\times
\{\text{required envelope}\}.
\]
Functional, $k\beta$, and objective-bound comparisons can likewise be distributed across candidates and envelopes before their results are combined.

GRACE therefore has a sequential outer loop over observation indices and a broad collection of independent calculations within each iteration. The available parallel work grows with the number of retained candidates, so the endpoints that are most expensive serially also offer the greatest opportunity for acceleration. Segment-cost calculations, searches for admissible upper bounds, lower bounds on the remaining cost, and searches over fixed tuning-parameter values provide additional independent work. This organization admits multicore, accelerator, and distributed implementations, while the cost of certifying intersections among higher-dimensional quadratic regions remains a central computational consideration.

\section{Discussion and extensions}
\label{sec:discussion}

This paper has given an exact algorithm for a broad family of changepoint model-selection problems. Within a finite collection of segment models and permitted boundary relationships, GRACE jointly selects the number and locations of changepoints, the type of change occurring at each boundary, the model fitted between successive boundaries, and all continuous parameters. Whenever the loss is additive and every transition depends on the fit through the preceding boundary only through a finite-dimensional collection of retained quantities, the general recursion attains the global minimum of \eqref{eq:typed-objective}.

CPOP is the continuous piecewise-linear instance of this principle. Its level-conditioned envelope retains exactly the quantity needed to join two linear segments continuously. We have shown that the same envelope also supports trend termination, trend resumption, and complete resets. A discontinuous level shift preserving slope requires an additional slope-conditioned envelope because optimality over endpoint level does not imply optimality over slope. Profiling every generated candidate into both envelopes before pruning gives an exact recursion in which all of these first-order transitions may compete.

The same construction extends beyond lines. A transition may preserve level, slope, curvature, seasonal coefficients, or any joint collection of quantities through which the outgoing segment depends on the fit through the boundary. GRACE retains the accumulated cost conditional on precisely that collection. When several quantities are preserved together, they remain exposed together in a higher-dimensional envelope; separate marginal envelopes do not contain the required joint conditional cost. This gives an exact calculus for changes in polynomial order, discontinuities in selected derivatives, persistent seasonality, and other models connected by finite-dimensional affine updates.

The transition types define the scientific question being asked. Once those possibilities have been declared, GRACE searches over their locations and combinations without committing greedily to a changepoint or to the kind of change occurring there. The exactness theorem therefore covers an entire family of objectives rather than a single predetermined form of piecewise signal. The linear and seasonal examples illustrate particular members of that family, while the general recursion specifies how additional segment models and boundary relationships enter the same optimization.

The principal computational cost is determined by the dimension of the conditional envelopes and the number of candidates that remain capable of participating in an optimum. Functional, $k\beta$, and objective-bound pruning reduce this collection without changing the result. First GRACE provides a lower-dimensional level-profiled fit from which a rigorous upper bound can be obtained after joint refitting, while relaxed continuation problems provide lower bounds on the remaining cost. Within each sequential endpoint, candidate generation, profiling, and many pruning comparisons decompose into independent calculations. These features provide a practical route from the general exact recursion to implementations for increasingly rich models.

GRACE builds on functional pruning for optimal partitioning and continuous piecewise-linear fitting \cite{psn,fpop,cpop}. Graph-constrained methods describe permitted relationships among discrete states, and gfpop combines such a graph with a scalar continuous parameter \cite{gfpop}. SPOP studies continuously differentiable quadratic splines with a fixed number of segments and obtains a practical algorithm by locally discretizing admissible boundary values and initial derivatives \cite{spop}. Geometric functional pruning and DUST address multivariate segment parameters when successive segments are otherwise independent \cite{geom,dust}. GRACE addresses a distinct dependence structure in which different transitions may preserve different projections of the incoming parameters and some quantities must remain jointly estimated across changes in regime.

\subsection{Likelihood-based models and dependent errors}
\label{sec:likelihood}

The general recursion also applies beyond least squares. Let $C_{r,t}(\theta)$ be an additive negative log likelihood or another segment loss. If the quantities retained at a boundary contain all information through which the fit up to that boundary can affect the outgoing segment, then the conditional replacement argument underlying Theorem~\ref{thm:exact} remains valid. Exact optimization requires the corresponding conditional minimizations to be performed globally.

Quadratic least squares is especially convenient because affine substitution and profiling preserve quadratic candidate costs. This gives closed-form updates and permits the pruning calculations developed above. Other likelihoods may produce conditional costs that are piecewise convex, nonconvex, or represented by functions other than quadratics. The same recursion still specifies the required minimization, although the representation of the conditional costs and the calculations used to compare them must be adapted to the chosen likelihood.

\paragraph{AR(1) residual dependence.}

Suppose
\[
e_i=y_i-f_i
\]
follows a Gaussian AR(1) process with coefficient $\rho$. Conditional on the initial residual, the innovation contribution is
\begin{equation}
\sum_{i=2}^{n}(e_i-\rho e_{i-1})^2.
\label{eq:ar1-loss}
\end{equation}
Equation~\eqref{eq:ar1-loss} is the conditional innovation contribution; the treatment of the initial residual must be included in the initial-segment cost.
The innovation crossing a changepoint depends on fitted values from both sides of the boundary. An exact recursion must therefore retain, or recover from its retained quantities, the fitted value needed to evaluate the incoming residual. A structural reset of the fitted mean does not remove this dependence unless the error process is also declared to restart at that boundary.

For fixed $\rho$, \eqref{eq:ar1-loss} remains quadratic in the fitted-mean parameters. The required boundary value can therefore be added to the quantities retained by each transition, after which the same quadratic recursion applies. If $\rho$ is selected from a finite collection of values, the exact recursion may be run separately for each value and the best resulting objective selected. Continuous estimation of $\rho$ requires either a globally valid outer minimization or a conditional representation that retains the dependence on $\rho$.

Likelihoods with nonquadratic conditional costs, continuously estimated dependence parameters, and more general residual processes provide natural directions for extending the computational machinery. The general recursion already identifies the information that must cross each boundary; the remaining task is to develop representations and pruning calculations suited to those conditional costs.

The central contribution can therefore be stated succinctly. A changepoint determines which features of the fit through the boundary remain binding on the segment that follows. Once those quantities are identified and retained jointly, the changepoint locations, transition types, segment models, and continuous parameters can be selected within one exact dynamic program.

\appendix
\section{Penalty convention and the Schwarz criterion}
\label{app:penalty-convention}

For independent Gaussian errors with common variance $\sigma^2$ held fixed, the maximized log likelihood satisfies
\[
-2\ell(\widehat\theta) = \frac{\mathrm{RSS}}{\sigma^2}+\text{constant}.
\]
The Schwarz criterion adds $q\log n$ for a regular model with $q$ estimated parameters \cite{schwarz}. Multiplying by $\sigma^2$ expresses this criterion on the residual-sum-of-squares scale:
\[
\mathrm{RSS}+q\sigma^2\log n.
\]
This gives the parameter-penalty unit $\beta=\sigma^2\log n$ used in the main text. When the variance is estimated beforehand, its supplied estimate is held fixed throughout the optimization.

Let $p_0$ be the number of independent continuous parameters in the initial regime, and let $p_{d_j}$ be the number introduced by transition $d_j$. For the preservation, release, and fixing operations considered here, the continuous-parameter count is
\[
q_{\mathrm{continuous}} = p_0+\sum_{j=1}^{m}p_{d_j}.
\]
A preserved parameter is counted only when first introduced. A released parameter is estimated afresh and contributes a new parameter. Fixing an outgoing parameter at a specified value contributes none.

We additionally assign one penalty unit to each estimated changepoint location. The total complexity penalty is therefore
\[
\beta\left(m+p_0+\sum_{j=1}^{m}p_{d_j}\right)
=
p_0\beta+\sum_{j=1}^{m}(1+p_{d_j})\beta,
\]
giving the initial-model penalty $p_0\beta$ and transition penalty $\beta_d=(1+p_d)\beta$.

Consequently, a continuous slope change or trend resumption costs $2\beta$ for location and slope; a level shift preserving slope costs $2\beta$ for location and level; a reset to a line costs $3\beta$ for location, level, and slope; trend termination costs $\beta$ for location alone; and a reset to a constant costs $2\beta$ for location and level.

The one-unit location charge is an explicit extension of the Schwarz parameter-counting convention. Its statistical calibration requires a changepoint-specific argument because locations are discrete and nonregular. The exactness of the GRACE recursion concerns optimization of the prescribed criterion and is unaffected by how these penalties are statistically calibrated.

\section{Explicit level-to-slope coefficient transformation}
\label{app:transformation}

This appendix gives the coefficient map for a linear candidate generated from an incoming endpoint quadratic and the observations on $(r,t]$. All observation losses are ordinary residual sums of squares.

\subsection{Mapping the ending-level quadratic}

Suppose the candidate's ending-level profile is
\[
f_t^\phi(\phi)=A\phi^2+B\phi+C,
\]
and its generating cost is
\[
f_r(\psi)+C_{r,t}(\psi,\phi)+\beta_d,
\]
where $f_r$ is an incoming endpoint quadratic and $\beta_d$ is the penalty for the transition that generated the candidate.

Define the following quantities from the generating segment:
\begin{equation}
\begin{split}
V &= \sum_{i=r+1}^{t}\left(1-\frac{i-r}{t-r}\right)\frac{i-r}{t-r}
   = \frac{(t-r)^2-1}{6(t-r)},\\
W &= \sum_{i=r+1}^{t}\left(\frac{i-r}{t-r}\right)^2
   = \frac{(t-r+1)\{2(t-r)+1\}}{6(t-r)},\\
R &= \sum_{i=r+1}^{t}\frac{i-r}{t-r}\,y_i.
\end{split}
\label{eq:map-data}
\end{equation}
Set
\[
G=(V+W)^2-A(2V+W).
\]
When $G>0$, its slope-conditioned quadratic is
\[
f_t^{\dot\phi}(\dot\phi)=a\dot\phi^2+b\dot\phi+c,
\]
where
\begin{equation}
\boxed{
\begin{aligned}
a &= \frac{(t-r)^2AV^2}{G},\\[2mm]
b &= \frac{(t-r)V\{(V+W)B+2AR\}}{G},\\[2mm]
c &= C+\frac{(2V+W)B^2+4(V+W)BR+4AR^2}{4G}.
\end{aligned}
}
\label{eq:explicit-map}
\end{equation}

The map uses $A,B,C$ and the candidate's generating segment data. It is applied candidate by candidate before the level and slope envelopes select their respective winners.

\subsection{Derivation}

Expanding the generating quadratic gives
\[
h\psi^2+2V\psi\phi+W\phi^2+\ell\psi-2R\phi+e,
\]
where $h,\ell,e$ include the incoming candidate cost and the remaining segment terms.

Minimizing over $\psi$ gives
\[
A=W-\frac{V^2}{h},
\qquad
B=-2R-\frac{V\ell}{h},
\qquad
C=e-\frac{\ell^2}{4h}.
\]
Substituting $\psi=\phi-(t-r)\dot\phi$ makes the coefficient of $\phi^2$ equal to $h+2V+W$. Minimizing over $\phi$ therefore gives
\[
\begin{aligned}
a &= (t-r)^2\left\{h-\frac{(h+V)^2}{h+2V+W}\right\},\\
b &= -(t-r)\ell+\frac{(t-r)(h+V)(\ell-2R)}{h+2V+W},\\
c &= e-\frac{(\ell-2R)^2}{4(h+2V+W)}.
\end{aligned}
\]
Eliminating $h,\ell,e$ using the expressions for $A,B,C$ yields \eqref{eq:explicit-map}.

The direct map applies to the stated generating structure when $G>0$ and there are no additional domain restrictions. When these conditions fail, the slope profile is constructed directly from the generating conditional cost. Candidates arriving through other retained quantities use the corresponding substitution and profiling calculation.

\subsection{A direct formula from the incoming quadratic}

The slope coefficients can also be computed directly while appending the segment. Suppose
\[
f_r(\psi)=A_r\psi^2+B_r\psi+C_r.
\]
Define
\[
\begin{array}{lll}
S_0=t-r,
&
S_1=\displaystyle\sum_{i=r+1}^{t}(i-r),
&
S_2=\displaystyle\sum_{i=r+1}^{t}(i-r)^2,
\\[2mm]
T_0=\displaystyle\sum_{i=r+1}^{t}y_i,
&
T_1=\displaystyle\sum_{i=r+1}^{t}(i-r)y_i,
&
T_2=\displaystyle\sum_{i=r+1}^{t}y_i^2.
\end{array}
\]
The location sums are
\[
S_1=\frac{(t-r)(t-r+1)}{2},
\qquad
S_2=\frac{(t-r)(t-r+1)\{2(t-r)+1\}}{6}.
\]

At fixed slope, the generating cost is
\[
A_r\psi^2+B_r\psi+C_r+\sum_{i=r+1}^{t}\{y_i-\psi-(i-r)\dot\phi\}^2+\beta_d.
\]
Its minimizing starting value is
\[
\psi^*(\dot\phi)=\frac{2T_0-B_r-2S_1\dot\phi}{2(A_r+S_0)}.
\]
Substitution gives
\begin{equation}
\boxed{
\begin{aligned}
a &= S_2-\frac{S_1^2}{A_r+S_0},\\[2mm]
b &= -2T_1-\frac{S_1(B_r-2T_0)}{A_r+S_0},\\[2mm]
c &= C_r+T_2+\beta_d-\frac{(B_r-2T_0)^2}{4(A_r+S_0)}.
\end{aligned}
}
\label{eq:source-map}
\end{equation}
The relation $\psi^*(\dot\phi)$ supplies the starting value needed to reconstruct the selected fit.

\printbibliography
\end{document}